\newtheorem{theorem}{Theorem}[section]
\newtheorem{lemma}[theorem]{Lemma}
\newtheorem{remark}[theorem]{Remark}
\newcommand{\R}{\mathbb{R}}
\newcommand{\Ep}{\mathbb{E}}
\newcommand{\Eppos}{\Ep_{\mathrm{pos}}}
\newcommand{\Covpos}{\mathrm{Cov}_{\mathrm{pos}}}
\newcommand{\Vpos}{\mathbb{V}_{\mathrm{pos}}}
\renewcommand{\Pr}{\mathbb{P}}
\renewcommand{\tilde}{\widetilde}
\renewcommand{\hat}{\widehat}
\newcommand{\reviseend}{\color{black}}
\newcommand{\revision}[2]{\sout{#1} {\color{blue}#2}}
\begin{document}
\begin{frontmatter}
\title{Posterior covariance information criterion for general loss functions}
\runtitle{PCIC for general loss}

\begin{aug}
\author{\fnms{Yukito} \snm{Iba}}
\and
\author{\fnms{Keisuke} \snm{Yano}}
\address{The Institute of Statistical Mathematics, \\
10-3 Midori cho, Tachikawa City, Tokyo 190-8562, Japan. 
}
\end{aug}

\begin{abstract}
We propose a novel computationally low-cost method for estimating a general predictive measure of generalised Bayesian inference.
The proposed method utilises posterior covariance and 
provides estimators of the Gibbs and the plugin generalisation errors.
We present theoretical guarantees of the proposed method, 
clarifying the connection
to the Bayesian sensitivity analysis and the infinitesimal jackknife approximation of Bayesian leave-one-out cross validation. We illustrate several applications of our methods, including applications to differential privacy-preserving 
learning, the Bayesian hierarchical modeling, the Bayesian regression in the presence of influential observations, and the bias reduction of the widely-applicable information criterion. The applicability in high dimensions is also discussed.
\end{abstract}

\begin{keyword}
Bayesian statistics;
predictive risk;
Markov chain Monte Carlo;
infinitesimal jackknife approximation;
sensitivity;
widely-applicable information criterion
\end{keyword}

\end{frontmatter}

\section{Introduction}

Bayesian statistics has achieved great successes in many applied fields because it can represent complex shapes of distributions, naturally quantify uncertainties, and accommodate the prior information commonly accepted in applied fields.
The development of the Markov chain Monte Carlo (MCMC) methods has improved the utilization of Bayesian statistics.
Nowadays, advanced MCMC algorithms are available and utilised in applied fields. Several softwares implement MCMC and enhance the accessibility of Bayesian methods. 

Once Bayesian models are fitted to the data, the goodness of fit is evaluated. Measuring predictive performance is a method for this evaluation (e.g.,~\citealp{Vehtari_Ojanen_2012}). It is also known as generalisation ability, and is an important concept in the machine learning literature. Many studies have been conducted to estimate the predictive risks of Bayesian models using specific evaluation functions.
To construct an asymptotically unbiased estimator of the posterior mean of an expected log-likelihood,
\cite{Spiegelhalter_etal_2002} employed the difference between the posterior mean of the log-likelihood and the log-likelihood evaluated at the posterior mean, and proposed the deviance information criterion (DIC).
\cite{Ando_2007} modified DIC to accommodate the model misspecification. As an alternative approach to constructing an asymptotically unbiased estimator,
\cite{Watanabe_2010} employed the sample mean of the posterior variances of sample-wise log-likelihoods, and proposed
the widely applicable information criterion (WAIC). 
As its name suggests, WAIC has now been popular in Bayesian data analysis. 
\cite{Okuno_Yano} showed that WAIC is applicable to overparametrized models.
\cite{Iba_Yano_arXiv} extended WAIC for the use in the weighted inference.
More recently, \cite{Silva_Zanella_2023} proposed yet another estimator based on the mixture representation of a leave-one-out predictive distribution.
These criteria successfully evaluate the posterior mean of the log-likelihood using samples from a single run of posterior simulation with primary data; additional simulations with leave-one-out data are unnecessary.

However, such useful tools for Bayesian models cannot accommodate arbitrary evaluation function for the prediction or arbitrary score function for the training.
Because the choice of a predictive evaluation function is application-specific and intended to gain benefits from predicting the model,
handling more general evaluation functions is desirable.
Further, generalised Bayesian approach that accommodates arbitrary score function other than the log-likelihood in the Bayesian framework has growing attentions (\citealp{Chernozhukov_Hong_2003,Jiang_Tanner_2008,Ribatet_etal_2012,Bissiri_etal_2016,Miller_Dunson_2019,Giummole_etal_2019,Nakagawa_Hashimoto_2020,Matsubara_etal_2022}). This approach makes the Bayesian approach more robust against misspecification of adequate prior distribution and model assumption.
Cross-validation (\citealp{Stone_1974,Geisser_1975}) is a ubiquitous tool for handling arbitrary evaluation function and arbitrary score function.
Although the leave-one-out cross validation (LOOCV) is intuitive and accurate, the brute force implementation requires recomputing the posterior distribution repeatedly, and is almost prohibitive. 
The importance sampling cross validation (IS-CV; \citealp{Gelfand_Dey_Chang_1992,VehtariandLampinen_2003}) estimates LOOCV without using leave-one-out posterior distributions.
The Pareto-smoothing importance sampling cross validation (PSIS-CV; \citealp{Vehtari_PSIS_2024}) extends the range where an importance sampling technique is safely used.
\cite{Vehtari_2002} proposed a generalisation of DIC to an arbitrary evaluation function.
\cite{Underhill_Smith_2016} proposed the Bayesian predictive score information criterion (BPSIC) using information matrices, and showed that BPSIC is asymptotically unbiased to generalisation error for a differentiable evaluation function.

In this study, we develop yet another novel generalisation error estimate that accommodates general evaluation and score functions in line with Bayesian computation.
The proposed method employs the \textit{posterior covariance}
to provide bias correction for empirical errors and presents an asymptotically unbiased estimator for generalisation errors. 
The proposed method demonstrates significant computational and theoretical features.
First, the proposed method can avoid the naive importance sampling technique that is sensitive to the presence of influential observations (\citealp{Peruggia_1997}); therefore, the proposed method is expected to be numerically stable with respect to such observations.
Second, the proposed method is theoretically supported by its asymptotic unbiasedness. 
Third, and most importantly, users can simply subtract the posterior covariance to estimate the arbitrary generalization error, which is rooted in an important theoretical foundation discussed in the preceding paragraph and can avoid the unstable computation of information matrices and their inverse matrices.
These advantages are illustrated 
in applications to differential privacy-preserving 
learning (\citealp{Dwork_2006}) (Subsection \ref{subsec:privacy}), the Bayesian hierarchical modeling (Subsection \ref{subsec:BHM}), the Bayesian regression in the presence of influential observations (Subsection \ref{subsec: influential observations}), the bias reduction of WAIC (Subsection \ref{subsec: bias reduction}), and the high-dimesitional models (Subsection \ref{subsec:highdimension}).

Why does the form of \textit{posterior covariance} appear in the proposal? This question leads us to find an interesting connection to the Bayesian local sensitivity and the infinitesimal jackknife approximation of the LOOCV. 
The Bayesian sensitivity formula (e.g., \citealp{Perez_etal_2006,MillarandStewart(2007),
Thomas_MacEachern_Peruggia_2018,
Giordano_etal_2018,Iba_Wkernel,Giordano_Broderick_arXiv}) implies that the posterior covariance appears by the perturbation of the posterior distribution.
This, together with
the infinitesimal jackknife approximation of LOOCV (e.g., \citealp{Beirami_etal_2017,Giordano_etal_2019,RadandMaleki_2020,Giordano_Broderick_arXiv}),
leads us to find that the proposed method corresponds to an infinitesimal jackknife approximation of the Bayesian LOOCV, which is a reason for the appearance of the posterior covariance form; see Subsection \ref{subsec: IJ}.
This aspect is reminiscent of the classical result on the asymptotic equivalence between LOOCV and Akaike information criterion (AIC; \citealp{Akaike_1973}) discovered by \cite{Stone_1974}. Also, it reduces to the asymptotic equivalence between WAIC and Bayesian LOOCV, given in Section 8 of \cite{Watanabe_book_mtbs}, when the evaluation function is the log-likelihood.

The organization of this paper is as follows.
Section \ref{sec:op} introduce our methodologies and present the infinitesimal jackknife interpretation of our methods. This interpretation gives a theoretical support (Theorem \ref{thm:PCICG}) to our methods. Section \ref{sec:example} illustrate the applications of our methods using both synthetic and real datasets.
Section \ref{sec:discussions} discuss the applicability to high-dimensional models,
the application to defining a case-influence measure, and the possibility of different definitions of the generalisation errors. Appendices include the proofs of the theorems and additional figures.

\section{Proposed method}\label{sec:op}

Suppose that we have $n$ observations $X^{n}=(X_{1},\ldots,X_{n})$ with each observation
following an unknown sampling distribution on a sample space $\mathcal{X}$ in $\R^{d}$, and
our target is to predict $X_{n+1}$ that follows the same sampling distribution
and is independent from current observations.
Let $\Theta$ be the parameter space in $ \R^{p}$.

\subsection{Predictive evaluation}

This subsection presents our methodologies.
We work with the generalised posterior distribution given by
\begin{align}
    \pi(\theta \,;\, X^{n} ) = 
    \frac{ \exp\{\sum_{i=1}^{n}s(X_{i},\theta)\}\pi(\theta) }
    {\int \exp\{\sum_{i=1}^{n}s(X_{i},\theta')\}\pi(\theta') d\theta'},
\end{align}
where $\pi(\theta)$ is a prior density on $\Theta$ and $s(x,\theta)$ is a score function that is the minus of a loss function. The score function may be different from the log-likelihood.

For the predictive evaluation of the generalised Bayesian approach,
consider an evaluation function $\nu(x,\theta)$
for a future observation $x\in\mathcal{X}$ and a parameter vector $\theta\in\Theta$.
Examples include
the mean squared error
$\nu(x,\theta)=\|x-\Ep[X\mid \theta]\|^{2}$,
the $\ell_{1}$ error
$\nu(x,\theta)=\|x-\Ep[X\mid \theta]\|$,
the log likelihood
$\nu(x,\theta)=\log p(x\mid \theta)$,
the likelihood
$\nu(x,\theta)=p(x\mid \theta)$,
and the p-value
$\nu(x,\theta)=\int_{x}^{\infty} p(x'\mid \theta)dx'$, where $\{p(x\mid\theta):\theta\in\Theta\}$ is a parametric model and $\Ep[\cdot\mid \theta]$ is the expectation with respect to $p(x\mid \theta)$.

On the basis of $\nu(x,\theta)$, 
we consider two types of predictive measures:
the Gibbs generalisation error
\[
\mathcal{G}_{\mathrm{G},n}=
\mathcal{G}_{\mathrm{G},n}(X^{n})=
\Ep_{X_{n+1}}[\,\Eppos[\nu(X_{n+1},\theta) \mid X^{n}]\,]
\]
and the plugin generalisation error
\[
\mathcal{G}_{\mathrm{P},n}=
\mathcal{G}_{\mathrm{P},n}(X^{n})=
\Ep_{X_{n+1}}[\,\nu(X_{n+1},\Eppos[\theta\mid X^{n}])
\,
],
\]
where $\Eppos[\,\,\cdot\,\,\mid\, X^{n}]$ is the generalised posterior expectation, and
$\Ep_{X_{n+1}}[\,\,\cdot\,\,]$ is the expectation with respect to $X_{n+1}$.

To estimate the Gibbs and plugin generalisation errors on the basis of current observations, we propose the Gibbs and the plugin posterior covariance information criteria $\mathrm{PCIC}_{\mathrm{G}}$ and $\mathrm{PCIC}_{\mathrm{P}}$:
\begin{align}
    \mathrm{PCIC}_{\mathrm{G}}
    &= 
    \frac{1}{n}\sum_{i=1}^{n}
    \Eppos[\nu(X_{i},\theta)\mid X^{n}]
    -\frac{1}{n}\sum_{i=1}^{n}\Covpos[\nu(X_{i},\theta),s(X_{i},\theta)\mid X^{n}] \quad \text{and}
    \\
    \mathrm{PCIC}_{\mathrm{P}}
    &= 
    \frac{1}{n}\sum_{i=1}^{n}\nu(X_{i},\Eppos[\theta\mid X^{n}])
    -\frac{1}{n}\sum_{i=1}^{n}\Covpos[\nu(X_{i},\theta),s(X_{i},\theta)\mid X^{n}],
\end{align}
where $\Covpos[\,\, \cdot \,,\, \cdot \mid X^{n}]$ is the generalised posterior covariance. 
Table 1 %\ref{tab:riskestimate} 
summarises how we obtain estimates of the Gibbs and the plug-in generalisation errors.
The first terms correspond to the empirical errors
\begin{align*}
    \mathcal{E}_{\mathrm{G},n}=\frac{1}{n}\sum_{i=1}^{n}\Eppos[\nu(X_{i},\theta)\mid X^{n}]
    \quad\text{and}\quad
    \mathcal{E}_{\mathrm{P},n}
    &= 
    \frac{1}{n}\sum_{i=1}^{n}\nu(X_{i},\Eppos[\theta \mid X^{n}]);
\end{align*}
Our proposed methods employ the posterior covariance as bias correction of empirical errors.

\begin{table}\label{tab:riskestimate}
 \caption{Algorithm for the generalisation error estimation.}
 \begin{framed}
 \begin{algorithmic}
 \REQUIRE Observations $X_{1},\ldots,X_{n}$ and posterior samples $\theta_{1},\ldots,\theta_{M}$ from $\pi(\theta;X^{n})$ with size $M$.
 \ENSURE  An estimate of $\mathcal{G}_{\mathrm{G},n}$ and that of $\mathcal{G}_{\mathrm{P},n}$.\\ 
  \STATE Calculate
  \[T_{G}=\frac{1}{n}\sum_{i=1}^{n}\frac{1}{M}\sum_{k=1}^{M}\nu(X_{i},\theta_{k})
  \,\,\text{and}\,\,
  T_{P}=\frac{1}{n}\sum_{i=1}^{n}\nu\left(X_{i},
  \frac{1}{M}\sum_{k=1}^{M}\theta_{k}\right).\]\\
  \STATE Calculate also
  \[
  V=\frac{1}{n}\sum_{i=1}^{n}\frac{1}{M}\sum_{k=1}^{M}\{\nu(X_{i},\theta_{k})s(X_{i},\theta_{k})\}
  -
  \frac{1}{n}\sum_{i=1}^{n}
  \frac{1}{M}\sum_{k=1}^{M}\nu(X_{i},\theta_{k})
  \frac{1}{M}\sum_{k=1}^{M}s(X_{i},\theta_{k}).
  \]
 \RETURN $(T_{G}-V)$ as an estimate of $\mathcal{G}_{\mathrm{G},n}$
 and 
 $(T_{P}-V)$ as an estimate of $\mathcal{G}_{\mathrm{P},n}$.
 \end{algorithmic}
 \end{framed}
\end{table}
 
\begin{remark}[Computation]
An important point of the proposed criteria is that their computation works well with the Bayesian computation. Typical generalisation error estimates such as Takeuchi Information Criterion (TIC; \citealp{Takeuchi_1976}), Regularization Information Criterion (RIC; \citealp{Shibata_1989}), and Generalised Information Criterion (GIC; \citealp{Konishi_Genshiro_1996}) employ information matrices like $\hat{J}_{s}:=(1/n)\sum_{i=1}^{n}\{-\nabla_{\theta}\nabla_{\theta}^{\top}s(X_{i},\theta_{s})\}$ and their inverses, where $\nabla_{\theta}$ is the gradient with respect to $\theta$.
In particular, 
the computation of $\hat{J}^{-1}_{s}$ is instable and demanding in the standard Bayesian computation.
Instead of this,
the proposed method utilises posterior covariance to avoid such a demanding computation.
\end{remark}

\begin{remark}[Connection to WAIC]
When working with a parametric model $\{p(x\mid\theta):\theta\in\Theta\}$,
we set the minus log-likelihood as
the loss function,
and
consider the posterior distribution with learning rate $\beta>0$.
We then have $\nu(X,\theta)=-\log p(X\mid\theta)$
and
$s(X,\theta)=\beta \log p(X\mid \theta)$.
In this case, $\mathrm{PCIC}_{\mathrm{G}}$ is reduced to $\mathrm{WAIC}_{2}$ given in Section 8.3 of \cite{Watanabe_book_mtbs}:
\begin{align*}
    \mathrm{WAIC}_{2}=
    -\frac{1}{n}\sum_{i=1}^{n}\Eppos[\log p(X_{i}\mid\theta)\mid X^{n}]
    +\frac{\beta}{n}\sum_{i=1}^{n}\Vpos[\log p(X_{i}\mid\theta)\mid X^{n}],
\end{align*}
where $\Vpos[\,\,\cdot\,\, \mid X^{n}]$ is the generalised posterior variance.
\end{remark}

\subsection{Infinitesimal jackknife interpretation}\label{subsec: IJ}

Before presenting theoretical results, we shall discuss the infinitesimal jackknife (IJ) interpretation of the proposed method. 
The IJ approach is a general methodology that approximates algorithms requiring the re-fitting of models, such as cross validation and the bootstrap methods. In the recent machine learning literature, this methodology has been rekindled as a linear approximation of LOOCV; see \cite{Beirami_etal_2017,KohandLiang_2017,Giordano_etal_2019,RadandMaleki_2020}.
We briefly explain about the IJ approximation of the leave-one-out $Z$-estimates given as
\begin{align*}
\hat{\theta}^{(-i)}=\theta\quad\text{such that}\quad  \sum_{j\ne i} \nabla_{\theta} s(X_{j},\theta)=0\quad,i=1,\ldots,n.
\end{align*}
To describe the IJ approximation, we introduce the weighted $Z$-estimate
\begin{align*}
    \hat{\theta}_{w} := \theta \quad\text{such that}\quad  \sum_{i=1}^{n}w_{i} \nabla_{\theta} s(X_{i},\theta)=0
    \qquad \text{for}\qquad w=(w_{1},\ldots,w_{n})^{\top}\in\mathbb{R}^{n},
\end{align*}
where the leave-one-out $Z$-estimate is denoted by
\[
\hat{\theta}^{(-i)} = \hat{\theta}_{\mathbbm{1}_{-i}} \quad\text{with}\quad\mathbbm{1}_{-i}=(1,\ldots,1,\underbrace{0}_{\text{the $i$-th component}},1,\ldots,1)^{\top}.
\]
Then, by using the introduced weight $w$,
the leave-one-out estimate $\hat{\theta}^{(-i)}$ is linearly approximated as
\begin{align}
\hat{\theta}^{(-i)}_{\mathrm{IJ}} 
&= \hat{\theta} + \sum_{j=1}^{n}\frac{\partial \hat{\theta}_{w}}{\partial w_{j}}\Big{|}_{w=\mathbbm{1}}(\mathbbm{1}_{-i}-\mathbbm{1})_{j}
\quad
\text{with}\quad\mathbbm{1}=(1,\ldots,1,\ldots,1)^{\top}\nonumber\\
&=\hat{\theta}-\frac{\partial \hat{\theta}_{w}}{\partial w_{i}}\Big{|}_{w=\mathbbm{1}},
\label{eq: IJ formula}
\end{align}
which is known as the infinitesimal jackknife (IJ) approximation of leave-one-out estimates.
By the implicit function theorem,
the derivative respect to the weight is evaluated as
\begin{align*}
\frac{\partial \hat{\theta}_{w}}{\partial w_{j}}\Bigg{|}_{w=\mathbbm{1}}
=\left(-\sum_{k=1}^{n}\nabla_{\theta}\nabla_{\theta}^{\top} s(X_{k},\theta)\Big{|}_{\theta=\hat{\theta}}\right)^{-1}
\left(\nabla_{\theta}s(X_{j},\theta)\Big{|}_{\theta=\hat{\theta}}\right)
\end{align*}
and then the IJ approximation becomes
\begin{align*}
\hat{\theta}^{(-i)}_{\mathrm{IJ}} 
&=\hat{\theta}-\left(-\sum_{k=1}^{n}\nabla_{\theta}\nabla_{\theta}^{\top} s(X_{k},\theta)\Big{|}_{\theta=\hat{\theta}}\right)^{-1}
\left(\nabla_{\theta}s(X_{j},\theta)\Big{|}_{\theta=\hat{\theta}}\right).
\end{align*}

What is the Bayesian version of this formula?
Here we consider the Gibbs LOOCV
\begin{align}
\mathrm{CV}_{\mathrm{G}}
:=\frac{1}{n}\sum_{i=1}^{n}\Eppos[\nu(X_{i},\theta)\mid X^{-i}]
\label{eq: Gibbs LOOCV}
\end{align}
and 
the plug-in LOOCV
\begin{align}
\mathrm{CV}_{\mathrm{P}}
:=\frac{1}{n}\sum_{i=1}^{n}\nu(X_{i},\Eppos[\theta \mid X^{-i}])
\label{eq: plugin LOOCV}
\end{align}
where $\Eppos[\cdot\mid X^{-i}]$ is the expectation with respect to the leave-one-out generalised posterior distribution:
\[
\pi(\theta \,;\, X^{-i})
=\frac{\exp\{\sum_{j\ne i}^{}s(X_{j},\theta)\}\pi(\theta)}{\int \exp\{\sum_{j\ne i}s(X_{j},\theta')\}\pi(\theta')d\theta'}.
\]
To investigate the IJ approximation of these cross validations, we define the weighted generalised posterior distribution
\begin{align}
        \pi_{w}(\theta\,;\,X^{n}) 
    :=\frac{\exp\{\sum_{i=1}^{n}w_{i}s(X_{i},\theta)\}\pi(\theta)}{\int \exp\{\sum_{i=1}^{n}w_{i}s(X_{i},\theta')\}\pi(\theta')\mathrm{d}\theta'}\quad\text{for}\quad w=(w_{1},\ldots,w_{n})^{\top}\in\mathbb{R}^{n},
    \label{eq: weighted posterior}
\end{align}
and denote by $\Eppos^{w}[\,\,\cdot\,\,]$ the expectation with respect to $\pi_{w}(\theta\,;\,X^{n})$. Observe that we have
\begin{align*}
\Eppos[\nu(X_{i},\theta)\mid X^{-i}]
&= \Eppos^{\mathbbm{1}_{-i}}[\nu(X_{i},\theta)]
\quad\text{and}\\
\nu(X_{i},\Eppos[\theta \mid X^{-i}])
&= \nu(X_{i},\Eppos^{\mathbbm{1}_{-i}}[\theta]).
\end{align*}
This, together with employing a structure analogous to equation (\ref{eq: IJ formula}), leads us to the following IJ approximations:
\begin{align*}
(\Eppos[\nu(X_{i},\theta)\mid X^{-i}])_{\mathrm{IJ}}
&:=
\Eppos[\nu(X_{i},\theta)\mid X^{n}]
-\frac{\partial}{\partial w_{i}}\Eppos^{w}[\nu(X_{i},\theta)\mid X^{n}]\Big{|}_{w=\mathbbm{1}},\\
(\nu(X_{i},\Eppos[\theta \mid X^{-i}]))_{\mathrm{IJ}}
&:=\nu(X_{i},\Eppos[\theta \mid X^{n}])
-\frac{\partial}{\partial w_{i}}\nu(X_{i},\Eppos^{w}[\theta \mid X^{n}])\Big{|}_{w=\mathbbm{1}}.
\end{align*}

To evaluate the second terms in the IJ approximations,
we employ the following variants of local sensitivity formula \citep{MillarandStewart(2007),Giordano_etal_2018}. For more details on Bayesian local sensitivity, refer to \cite{Thomas_MacEachern_Peruggia_2018}.
\begin{lemma}\label{lem:sensitivity}
Under Condition C3 in Appendix \ref{appendix: assumptions}, we have, for $k=1,2$,
\begin{align*}
    \frac{\partial^{k}}{\partial w_{i}^{k}}
    \Eppos^{w}[\nu(X_{i},\theta)]
    &=\Eppos^{w}\left[\{\nu(X_{i},\theta)-\Eppos^{w}[\nu(X_{i},\theta)]\}\{s(X_{i},\theta)-\Eppos^{w}[s(X_{i},\theta)]\}^{k}\right]
    \\
    \frac{\partial^{k}}{\partial w_{i}^{k}}
    \Eppos^{w}[\theta]
    &=\Eppos^{w}\left[\{\theta-\Eppos^{w}[\theta]\}\{s(X_{i},\theta)-\Eppos^{w}[s(X_{i},\theta)]\}^{k}\right]
\end{align*}
In particular, we have
\begin{align*}
\frac{\partial}{\partial w_{i}}
    \Eppos^{w}[\nu(X_{i},\theta)]
    &= \Covpos^{w}[\nu(X_{i},\theta),s(X_{i},\theta)], \quad\text{and}\\
    \frac{\partial}{\partial w_{i}}
    \Eppos^{w}[\theta]
    &= \Covpos^{w}[\theta,s(X_{i},\theta)].
\end{align*}
\end{lemma}
The proof is given in Appendix \ref{appendix: proof of lemma}.
We should note that the final equation in the lemma is crucial for assessing the frequentist variance; see \cite{Giordano_Broderick_arXiv}.

For the Gibbs LOOCV, 
Lemma \ref{lem:sensitivity} yields the  form of the IJ approximation given by
\begin{align}
(\Eppos[\nu(X_{i},\theta)\mid X^{-i}])_{\mathrm{IJ}}
&=
\Eppos[\nu(X_{i},\theta)\mid X^{n}]
-\Covpos[\nu(X_{i},\theta),s(X_{i},\theta)].
\label{eq: explicit Gibbs IJ}
\end{align}
For the plug-in LOOCV,
further calculi are required.
If $\nu(X_{i},\theta)$ is 2-times continuous differentiable with respect to $\theta$
and its Hessian with respect to $\theta$ is bounded, then 
the Taylor expansion around $\hat{\theta}=\Eppos[\theta\mid X^{n}]$ yields
\begin{align*}
&\Covpos[\nu(X_{i},\theta),s(X_{i},\theta)]\\
&=\Covpos[\nu(X_{i},\theta)-\nu(X_{i},\hat{\theta})\,,\,s(X_{i},\theta)]\\
&=\Covpos\left[\sum_{a=1}^{d}\frac{\nu(X_{i},\theta)}{\partial\theta^{a}}
\Big{|}_{\theta=\hat{\theta}}
\{\theta^{a}-\hat{\theta}^{a}\}
+O_{P}(\|\theta-\hat{\theta}\|^{2})
\,,\,s(X_{i},\theta)\right]\\
&=\Covpos\left[\sum_{a=1}^{d}\frac{\nu(X_{i},\theta)}{\partial\theta^{a}}
\Big{|}_{\theta=\hat{\theta}}
\{\theta^{a}-\hat{\theta}^{a}\}
\,,\,s(X_{i},\theta)\right] + \mathrm{rem}\\
&=\sum_{a=1}^{d}\frac{\partial \nu(X_{i},\theta)}{\partial \theta^{a}}\Big{|}_{\theta=\hat{\theta}}\Covpos[\theta^{a}-\hat{\theta}^{a}\,,\,s(X_{i},\theta)]+\mathrm{rem}
\end{align*}
where $\theta=(\theta^{1},\ldots,\theta^{d})$, and $\mathrm{rem}$ is negligible under additional assumptions.
Here Lemma \ref{lem:sensitivity} gives
\[
\sum_{a=1}^{d}\frac{\partial \nu(X_{i},\theta)}{\partial \theta^{a}}\Big{|}_{\theta=\hat{\theta}}\Covpos[\theta^{a}-\hat{\theta}^{a}\,,\,s(X_{i},\theta)]
=\sum_{a=1}^{d}\frac{\partial \nu(X_{i},\theta)}{\partial \theta^{a}}\Big{|}_{\theta=\hat{\theta}}
\frac{\partial}{\partial w_{i}}\Eppos^{w}[\theta^{a}]\Big{|}_{w=\mathbbm{1}}.
\]
Observe that the chain rule gives
\[
\sum_{a=1}^{d}\frac{\partial \nu(X_{i},\theta)}{\partial \theta^{a}}\Big{|}_{\theta=\hat{\theta}}
\frac{\partial}{\partial w_{i}}\Eppos^{w}[\theta^{a}]\Big{|}_{w=\mathbbm{1}}
=\frac{\partial}{\partial w_{i}}
\nu(X_{i},\Eppos^{w}[\theta])\Big{|}_{w=\mathbbm{1}},
\]
implying that 
\[
\frac{\partial}{\partial w_{i}}
\nu(X_{i},\Eppos^{w}[\theta])\Big{|}_{w=\mathbbm{1}}
=\Covpos[\nu(X_{i},\theta),s(X_{i},\theta)]
+\mathrm{rem}.
\]
This concludes that the IJ approximation for the plug-in LOOCV is
\begin{align}
(\nu(X_{i},\Eppos[\theta \mid X^{-i}]))_{\mathrm{IJ}}
=\nu(X_{i},\Eppos[\theta \mid X^{n}])
-\Covpos[\nu(X_{i},\theta),s(X_{i},\theta)]+\mathrm{rem},
\label{eq: explicit plugin IJ}
\end{align}
where the remaining term $\mathrm{rem}$ is negligible.

So, the IJ approximations (\ref{eq: explicit Gibbs IJ}) and (\ref{eq: explicit plugin IJ}) of LOOCV presents  $\mathrm{PCIC}_{\mathrm{G}}$ and $\mathrm{PCIC}_{\mathrm{P}}$.
In the next subsection, we will show that the residual between the IJ appximations and the expected generalisation error is negligible.

In the literature on information criteria, the IJ methodology has been used to show the asymptotic equivalence between information criteria and LOOCV (\citealp{Stone_1974,Watanabe_2010_b});
see also \cite{Konishi_Genshiro_1996}.
The IJ approximation of the LOOCV $Z$-estimate requires
the second order differentiation and its inverse calculation (c.f., \citealp{Beirami_etal_2017,KohandLiang_2017}).
The discussion here emphasizes that
in Bayesian framework,
these calculi are avoidable
and there are user-friendly surrogates, that is, the posterior covariances.
Recently, \cite{Giordano_Broderick_arXiv}
analyse the Bayesian infinitesimal jackknife approximation for estimating frequentist's covariance in details, showing the consistency in finite dimensional models and giving discussions on the behaviours in high-dimensional models.

\subsection{Theoretical results}\label{subsec: theory}

On the basis of the IJ approximation discussed in the previous subsection,
this section presents the theoretical support for the use of $\mathrm{PCIC}_{\mathrm{G}}$.
The same result holds for $\mathrm{PCIC}_{\mathrm{P}}$, and it is omitted.

The following is the main theorem stating the asymptotic unbiasedness of the proposed criteria.
The proof consists of two steps: (1) the analysis on the residual between the IJ appximation and the LOOCV;
(2) the analysis on the expected values of the difference between the generalisation error and the LOOCV.
Suppose that current $n$ observations $X^{n}=(X_{1},\ldots,X_{n})$ are independent and identically distributed from an unknown sampling distribution on a sample space $\mathcal{X}$ in $\R^{d}$,
and our prediction target $X_{n+1}$ follows the same sampling distribution as that of each observation and is independent of $X^{n}$. The expectation $\Ep[\cdot]$ denotes the expectation with respect to $(X_{1,}\ldots,X_{n+1})$.

\begin{theorem}\label{thm:PCICG}
Under Conditions C1-C3,
the criterion $\mathrm{PCIC}_{\mathrm{G}}$ is asymptotically unbiased to the Gibbs generalisation error:
\begin{align}
\Ep[\mathrm{PCIC}_{\mathrm{G}}]=
\Ep[\mathcal{G}_{\mathrm{G},n}]
+o\left(\Ep[\mathcal{G}_{\mathrm{G},n}]
-\min_{\theta' \in \Theta} \mathbb{E}[\nu(X_{n+1}, \theta')]
\right).
\label{eq: in theorem}
\end{align}
\end{theorem}

\begin{proof}[Proof of Theorem \ref{thm:PCICG}]
First of all, we subtract $\min_{\theta' \in \Theta} \mathbb{E}[\nu(X_{n+1}, \theta')]$ from $\nu(X_{n+1}, \theta)$ and denote the result by $\nu$. This does not affect the proof, as we are simply subtracting the same quantity from both sides of equation (\ref{eq: in theorem}).

Next, we analyse the Gibbs LOOCV (\ref{eq: Gibbs LOOCV}) by continuing the Taylor expansion from the first order (\ref{eq: explicit Gibbs IJ}) to the second order.
Then we have
\begin{align*}
\Eppos[\nu(X_{i},\theta)\mid X^{-i}]
    =
    (\Eppos[\nu(X_{i},\theta)\mid X^{-i}])_{\mathrm{IJ}}
    +
    \frac{(-1)^{2}}{2}
    \frac{\partial^{2}}{\partial w_{i}^{2}}\bigg{|}_{w_{i}=w_{i}^{*}}
    \Eppos^{w}[\nu(X_{i},\theta)],
\end{align*}
where 
$\Eppos^{w}[\cdot]$ is defined in (\ref{eq: weighted posterior}),
$(\Eppos[\nu(X_{i},\theta)\mid X^{-i}])_{\mathrm{IJ}}$ is given in (\ref{eq: explicit Gibbs IJ}),
$w_{i}^{*}$ is a point in $[0,1]$.
Since Lemma \ref{lem:sensitivity} gives the form of the second term on the right hand side above, we get
\begin{align}
    \Eppos[\nu(X_{i},\theta)\mid X^{-i}]
    =
    (\Eppos[\nu(X_{i},\theta)\mid X^{-i}])_{\mathrm{IJ}}
    +
    \frac{1}{2}\kappa^{w^{*,i}}_{3}[i],
\label{eq:IJapproximation}
\end{align}
where 
$w^{*,i}=(1,\ldots,1,w^{*}_{i},1,\ldots,1)$ and
\begin{align*}
    \kappa^{w^{*,i}}_{3}[i]
    =\Eppos^{w^{*,i}}\left[
    \left\{\nu(X_{i},\theta)-\Eppos^{w^{*,i}}[\nu(X_{i},\theta)]\right\}
    \left\{s(X_{i},\theta)-\Eppos^{w^{*,i}}[s(X_{i},\theta)]\right\}^{2}
    \right].
\end{align*}
Summing up (\ref{eq:IJapproximation}) with respect to $i$ together with the  form (\ref{eq: explicit Gibbs IJ}) of the IJ approximation yields
the form of the Gibbs LOOCV
\begin{align}
    \mathrm{CV}_{\mathrm{G}}
    =\mathrm{PCIC}_{\mathrm{G}}
    +\frac{1}{2n}\sum_{i=1}^{n}\kappa_{3}^{w^{*,i}}[i].
\end{align}
Observe that the expected Gibbs LOOCV is just the expected Gibbs generalisation error:
\[
\Ep[\mathcal{G}_{\mathrm{G},n-1}]=\Ep[\mathrm{CV}_{\mathrm{G}}].
\]
Thus we get
\begin{align*}
    \Ep[\mathcal{G}_{\mathrm{G},n}]&=
    \Ep[\mathrm{PCIC}_{\mathrm{G}}]+
    \underbrace{
    \{
    \Ep[\mathcal{G}_{\mathrm{G},n}]
    -\Ep[\mathcal{G}_{\mathrm{G},n-1}]
    \}
    }_{=:A}
    +\underbrace{\frac{1}{2n}\sum_{i=1}^{n}\Ep[\kappa_{3}^{w^{*,i}}[i]]}_{=:B}.
\end{align*}
Condition C1 makes $A$ $o(\Ep[\mathcal{G}_{\mathrm{G},n}])$ and
Condition C2 makes $B$ $o(\Ep[\mathcal{G}_{\mathrm{G},n})]$, which completes the proof.
\end{proof}

We remark that the first part of the proof is essentially the same as the cumulant expansion used in the proof of asymptotic unbiasedness of WAIC \citep{Watanabe_book_mtbs}.

\section{Applications} \label{sec:example}

This section presents applications of the proposed methods 
in the comparison with the following existing methods:
\begin{itemize}
    \item Exact leave-one-out cross validation (LOOCV);
    \item A generalisation of DIC (GDIC; \citealp{Vehtari_2002});
    \item Bayesian predictive score information criterion (BPSIC; \citealp{Underhill_Smith_2016});
    \item Importance sampling cross validation (IS-CV; \citealp{Gelfand_Dey_Chang_1992});
    \item Pareto smoothed importance sampling cross validation (PSIS-CV; \citealp{Vehtari_PSIS_2024}).
\end{itemize}
Our applications in this section are summarised as follows:
\begin{itemize}
    \item We first present an application to the differential private learning;
    \item We second apply the methods to Bayesian hierarchical modeling;
    \item We then apply them to Bayesian regression models in the presence of influential observations;
    \item We lastly apply PCIC to eliminate the bias of WAIC due to strong priors.
\end{itemize}
Further, we remark that 
in Subsection \ref{subsec:highdimension},
we apply PCIC to high-dimensional models.

\subsection{Evaluation of differential private learning methods}\label{subsec:privacy}

\begin{figure}[h]
    \centering
    \includegraphics[width=100mm]{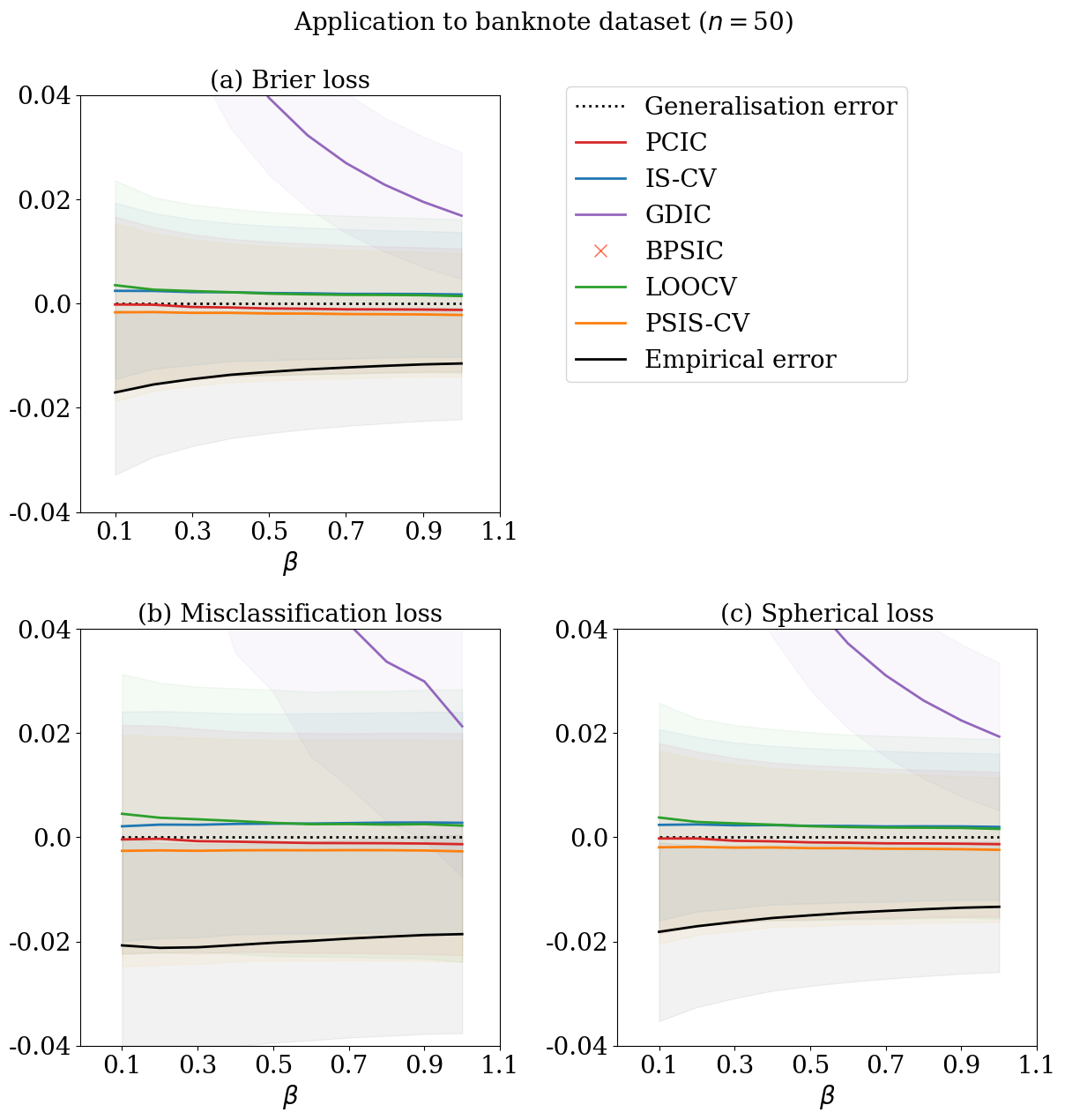}
    \caption{
    Means of generalisation error estimates relative to the generalisation errors for the banknote authentication dataset, with standard deviations represented as shaded areas. The vertical axes represent values relative to the generalisation errors, while the horizontal axes correspond to $\beta$ in the generalised posterior distribution.
(a) Generalisation error estimates for the Brier loss. (b) Estimates for the misclassification loss. (c) Estimates for the spherical loss.}
    \label{fig:banknote}
\end{figure}

\begin{figure}[h]
    \centering
    \includegraphics[width=100mm]{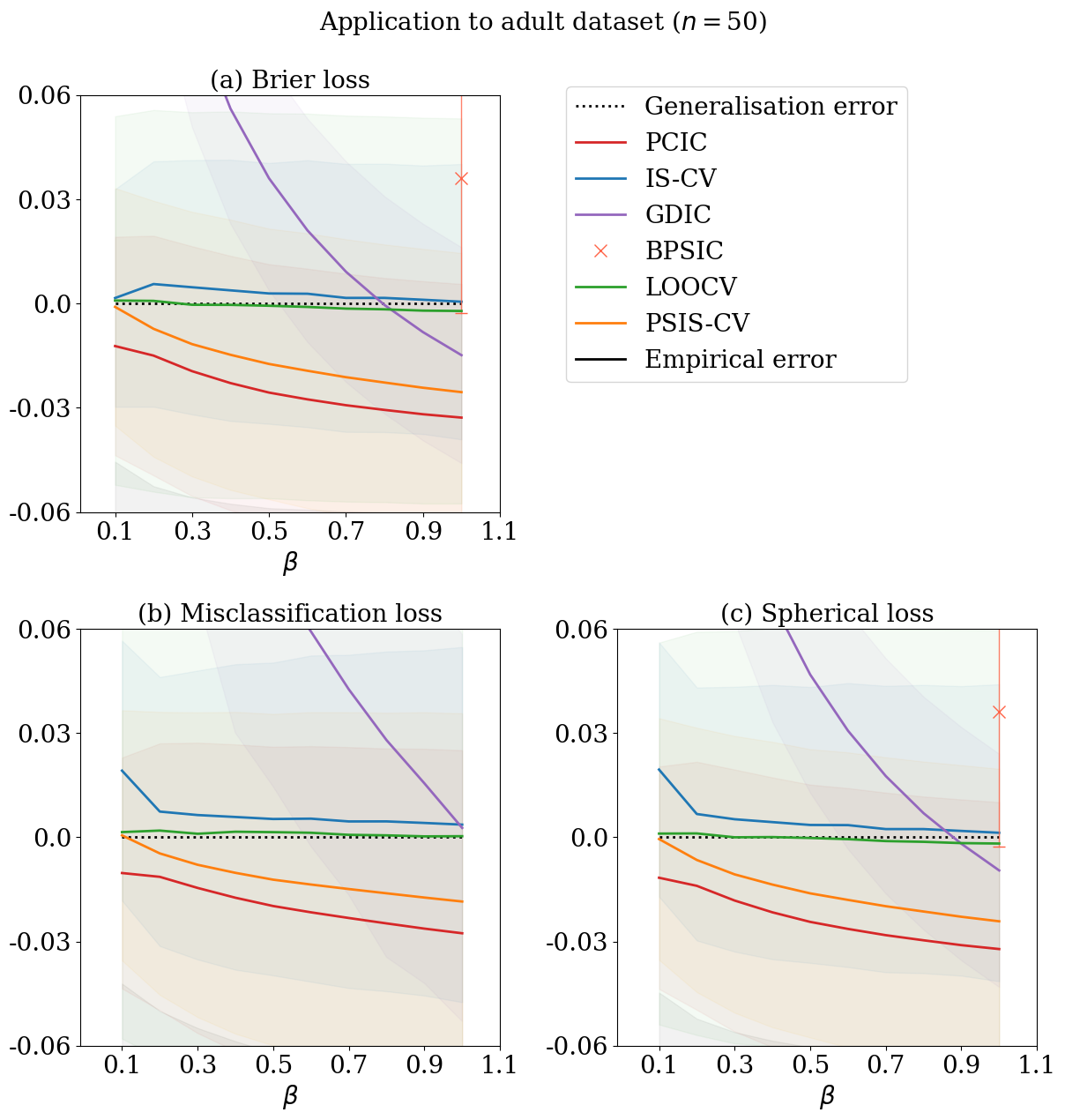}
    \caption{
    Means of generalisation error estimates relative to the generalisation errors for the adult dataset, with standard deviations represented as shaded areas. The vertical axes represent values relative to the generalisation errors, while the horizontal axes correspond to $\beta$ in the generalised posterior distribution.
(a) Generalisation error estimates for the Brier loss. (b) Estimates for the misclassification loss. (c) Estimates for the spherical loss.
    }
    \label{fig:adult}
\end{figure}

Differential privacy is a cryptographic  framework designed to ensure the privacy of individual users' data while enabling meaningful statistical analyses with a desired level of efficiency \citep{Dwork_2006}. It provides a formal guarantee that the inclusion or exclusion of a single individual's data does not significantly affect the output of the analysis, thereby protecting sensitive information. 
Differential privacy has many real-world applications (for example, see \citealp{Desfontaines_Pejo_2019}), and 
many strategies have been developed \citep{Abadi_2016,wang_PMLR_2015,Minami_2016,
Mironov_2017,Jewson_etal_2023}.

The foundational concept of ensuring differential privacy is the $(\varepsilon, \delta)$-differential private learner. For $\varepsilon, \delta \geq 0$, an $(\varepsilon, \delta)$-differentially private learner is defined as a randomized estimator $\hat{\theta}$ that satisfies the following property: for any pair of adjacent datasets $X^{n}, \tilde{X}^{n} \in \mathcal{X}^n$, that is, datasets with 
\[
\text{the number of }\{i : \text{$X_{i}$ and $\tilde{X}_{i}$ are different}\} \le 1,
\]
and for any measurable set $A$, the inequality 
\[
\Pr\left(\hat{\theta}(X^{n}) \in A\right) \leq \exp(\varepsilon) \Pr\left(\hat{\theta}(\tilde{X}^{n}) \in A\right) + \delta
\]
holds. Here, $\varepsilon$ represents the privacy budget, quantifying the level of privacy , while $\delta$ accounts for a small probability of privacy violation.

A widely employed strategy for achieving $(\varepsilon, \delta)$-differential privacy is the one-posterior-sample (OPS) estimator \citep{Wang_etal_2015,Minami_2016}. This estimator is a single sample from a generalized posterior distribution given by
\[
\pi_{\beta}(\theta; X^n) \propto \exp\left[-\beta \sum_{i=1}^n L(X_i, \theta)\right],
\]
where $L(\cdot, \cdot)$ denotes a user-specified loss or score function. The hyperparameter $\beta$ is intricately controlled by the privacy parameters $(\varepsilon, \delta)$, the choice of the score function $L$, and the dataset size $n$. This approach leverages the flexibility of posterior sampling to balance the trade-off between privacy guarantees and statistical utility.

Here, we demonstrate the application of $\mathrm{PCIC}_{\mathrm{G}}$ to understanding the predictive behaviour of OPS estimators with different values of $\beta$.
We use two sets of classification data from UCI Machine Learning Repository
(\citealp{UCI}), namely, 
the banknote authentication data set
and the adult data set.
The banknote authentication data set classifies genuine and forged banknote-like specimens based on four image features.
The adult data set 
predicts whether income exceeds $50$K/yr based on 14 features from census data.
We work with the generalised posterior distribution based on the logistic regression
\begin{align*}
    \pi_{\beta}(\theta;Y^{n},X^{n})\propto\exp\left[\beta \sum_{i=1}^{n} \left\{Y_{i}\log \sigma(X_{i}\theta)+(1-Y_{i})\log (1-\sigma(X_{i}\theta))\right\} \right]
    \exp\{-\theta^{\top}\theta/2\},
\end{align*}
where $\sigma(\cdot)$ is the sigmoid function: $\sigma(x):=1/(1+\exp(-x))$.
We consider predictive evaluation of OPS estimators using three major classification losses, the Brier loss, the misclassification loss, and the spherical loss:
\begin{align*}
    \nu_{\mathrm{Brier}}(x,p)
    &=(x-p)^{2},\\
    \nu_{\mathrm{misclass}}(x,p)
    &=\left\{
\begin{array}{ll}
-1 & \text{if}\,
x=1\,\text{and}\,p>1/2
\,\,\,\,\text{or}\,\,\,\,
x=0\,\text{and}\,p<1/2,
\\
0  &  \text{if otherwise},
\end{array}
\right.
\\
    \nu_{\mathrm{spherical}}(x,p)
    &=-\frac{\{xp+(1-x)(1-p)\}}{\sqrt{p^{2}+(1-p)^{2}}},
\end{align*}
where for the $i$ observation, $p$ is given by $\sigma(X_{i}\theta)$ using $\theta$.
The misclassification loss is non-differentiable with respect to $\theta$ and our theoretical results may not imply accurate generalisation error estimation; so, we confirm the applicability by numerical experiments.
First, we randomly split the whole data into a training dataset of a sample size 50, 20 times.
The test dataset is set to be the remaining.
We then calculate the empirical errors using the training dataset,
and the average of generalisation errors using the test dataset.
We used 3980 MCMC samples after thinning out by 5 and a burnin period of length 100. The MCMC algorithm here employs the Gibbs sampler based on Polya-Gamma augmentation \citep{polson2013}.

Figures \ref{fig:banknote} and \ref{fig:adult} display the average values of generalisation error estimates relative to  the average of the generalisation errors. The exact LOOCV is very close to the average of the generalisation error irrelevant to the loss and the dataset,
 but its computational cost is almost prohibitive.
GDIC is not close to the average of the generalisation error.
BPSIC works only for $\beta=1$ and a differentiable evaluation function.
The proposed method $\mathrm{PCIC}_{\mathrm{G}}$ successfully estimates the generalisation errors of OPS estimators for all three evaluation functions, including the non-differentiable evaluation function. 
IS-CV and PSIS-CV are comparable to the proposed method;
therefore, we cannot reach a conclusion regarding the performance difference of this setting.

\subsection{Application to Bayesian hierarchical models}
\label{subsec:BHM}

\begin{figure}[h]
    \centering
    \includegraphics[width=120mm]{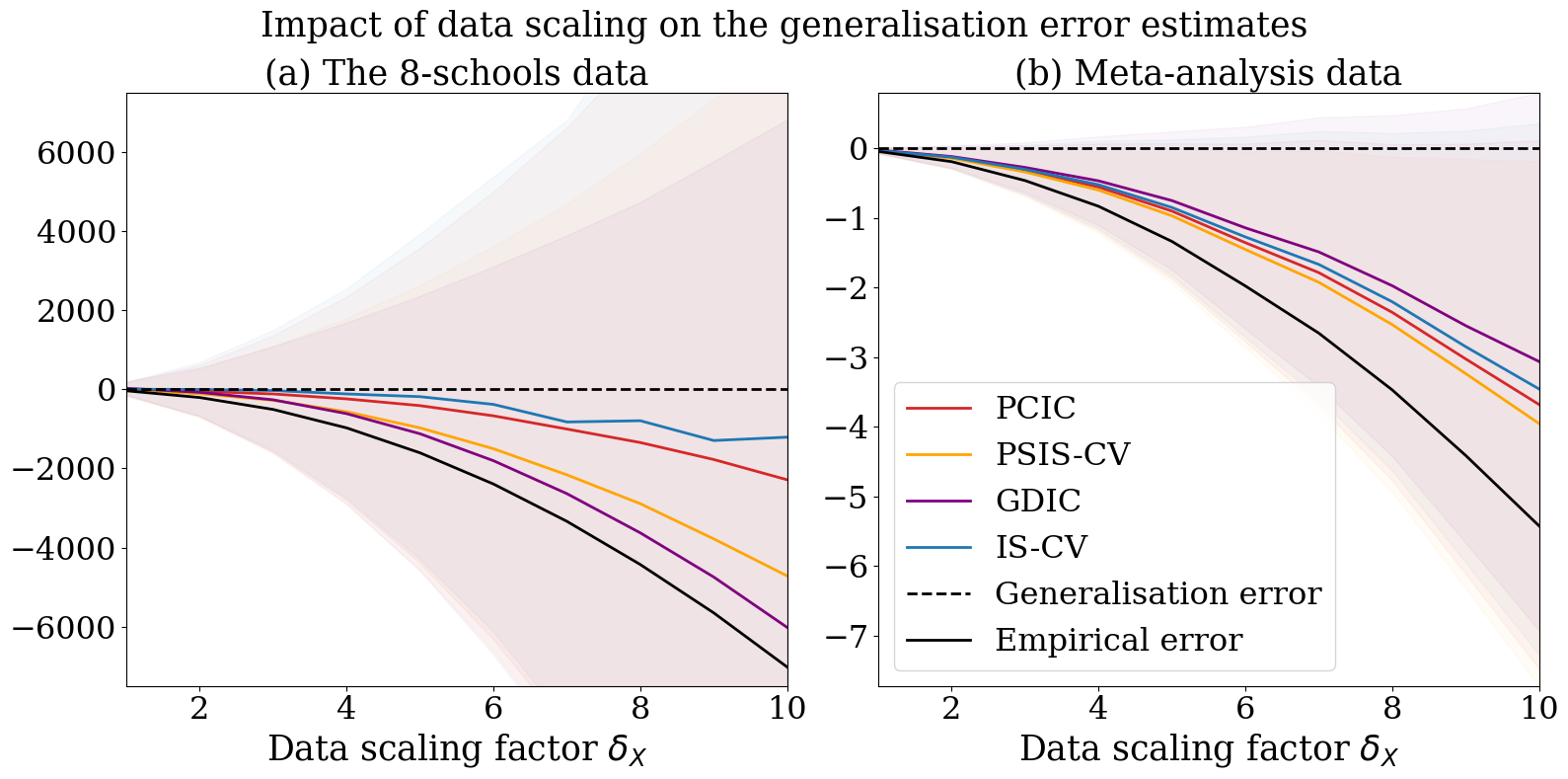}
    \caption{
    Generalisation error estimates relative to the generalisation errors with varying data scaling factors.
The averages are computed over different chains, with colored shades representing $\pm \sigma$ for these chains.
(a) Plots for the 8-schools data; (b) Plots for the meta-analysis data.
    }
    \label{fig:BHM_datascale}
\end{figure}

\begin{figure}[h]
    \centering
    \includegraphics[width=120mm]{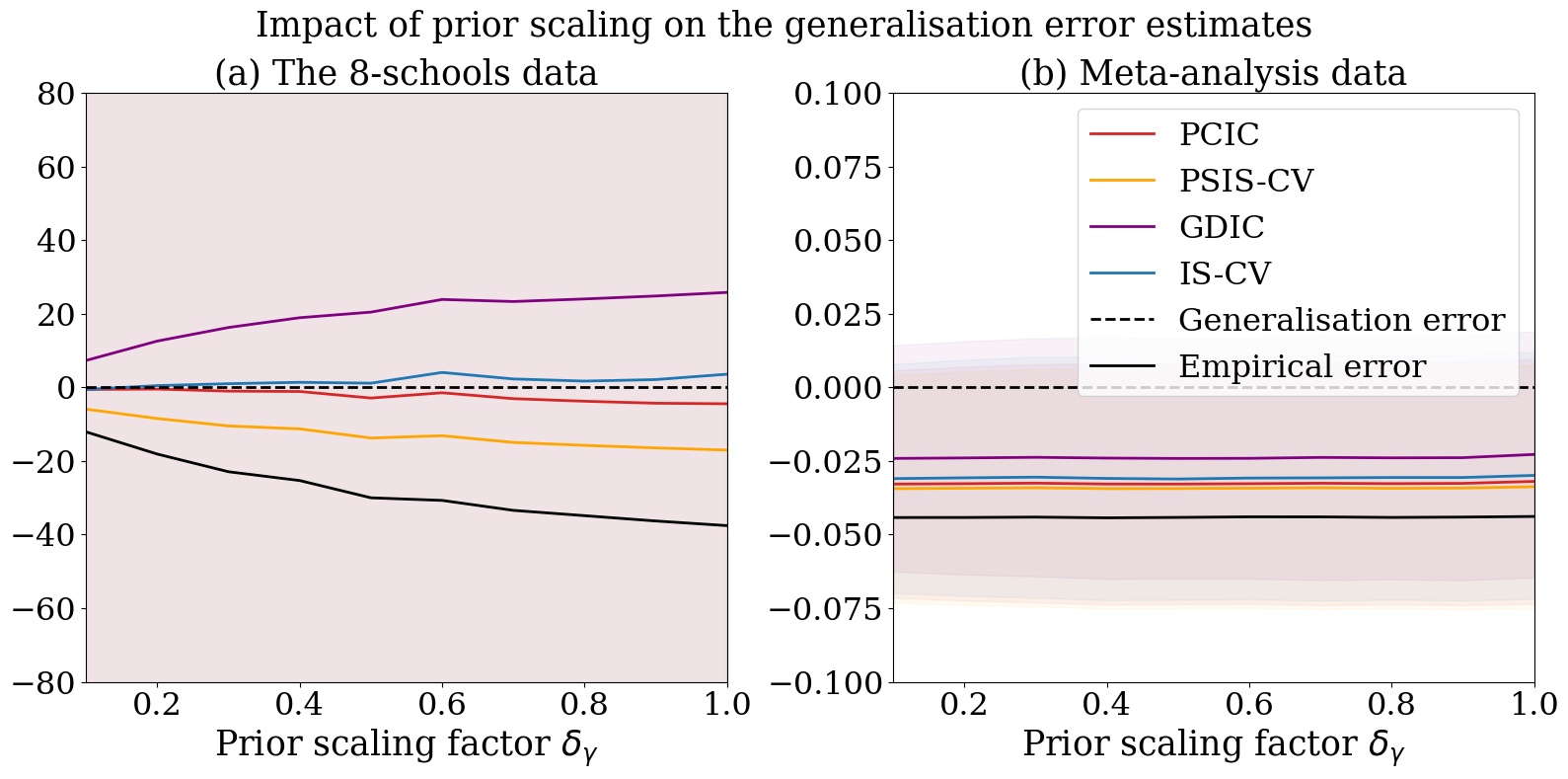}
    \caption{
    Generalisation error estimates relative to the generalisation errors with varying prior scaling factors.
The averages are computed over different chains, with colored shades representing $\pm \sigma$ for these chains.
(a) Plots for the 8-schools data; (b) Plots for the meta-analysis data.
    }
    \label{fig:BHM_priorscale}
\end{figure}

We then apply the proposed methods to 
predictive evaluation of Bayesian hierarchical modeling.
Here we shall work with a simple two-level normal model:
\begin{align*}
X_{ij} &\sim \mathcal{N}(\theta_{j},v_{j}^{2}),\quad
i=1,\ldots,n_{j},\quad
j=1,\ldots,J\\
\theta_{i} & \sim \mathcal{N}(\mu,\tau^{2}),\quad
j=1,\ldots,J,
\end{align*}
where $v_{j}^{2}$s are assumed to be given,
$n_{j}$ is the sample size of the $j$-th group ($j=1,\ldots,J$),
and $J$ is the number of groups.
Here we assume that $n_{j}$ is equal to 1 and denote $X_{ij}$ by $X_{j}$, which is always possible by the sufficiency reduction.
The prior distributions of $\mu$ and $\tau$ are specified by the scale mixture of the normal distribution with the half Cauchy distribution and
by the half normal distribution, respectively:
\begin{align*}
\mu \mid \sigma^{2}_{\mu}&\sim \mathcal{N}(0,\sigma^{2}_{\mu}),\quad
\sigma^{2}_{\mu} \sim \mathrm{Half}\mathcal{C}(\gamma),\quad\text{and}\\
\tau & \sim \mathrm{Half}\mathcal{N}(\sigma_{\tau}),
\end{align*}
where $\gamma$ and $\sigma_{\tau}$ are hyper parameters.

We measure the prediction accuracy of the information fusion in the Bayesian hierarchical structure across  groups by (unscaled) $\ell_{2}^{2}$ loss
\[
\nu(X_{j},\mu)=|X_{j}-\mu|^{2}.
\]
The working posterior in this application is rewritten as
\[
\pi(\mu,\tau ; X^{J}) \propto \exp\Bigg{(}
\sum_{j=1}^{J} s(X_{j};\mu,\tau)\Bigg{)}
\pi(\mu,\tau)
\,\,\text{with}\,\,
s_{j}(x;\mu,\tau)=\log \phi(x\,;\,\mu,\tau^{2}+v_{j}^{2}),
\]
where $\phi(x\,;\,a,b)$ is the normal density with mean $a$ and variance $b$.
So, $\mathrm{PCIC}_{\mathrm{G}}$ becomes
\[
\mathrm{PCIC}_{\mathrm{G}} = \frac{1}{J}\sum_{j=1}^{J}
\Eppos[\nu(X_{j},\mu)\mid X^{J}]-
\frac{1}{J}\sum_{j=1}^{J}\Covpos[
\nu(X_{j},\mu), s_{j}(X_{j},\mu,\tau)\mid X^{J}].
\]
To check the behaviour,
we use two datasets commonly used in the demonstration of the Bayesian hierarchical modeling.
One is the 8-schools data: the dataset consists of coaching effects with the standard deviations in $J=8$ different high schools in New Jersey; see Table 5.2 of \cite{BDA3}.
The other is the meta-analysis data from \cite{Yusuf_etal_1985}: the dataset consists of $J=22$ clinical trials of $\beta$-blockers for reducing mortality after myocardial infarction; see Table 5.4 of \cite{BDA3}. We apply empirical log-odds transformation so as to employ the normal model described above as in Section 5.8 of \cite{BDA3}.

We design two types of experiments.
One is changing the data scale by multiplying the data scaling factor $\delta_{X}\in\{1,2,\ldots,9,10\}$ to $X_{j}$s.
The other is changing the prior scale by multiplying the prior scaling factor $\delta_{\gamma}\in\{0.1i \,:\,i=1,\ldots,10\}$ to $\gamma$. 
We fix $\sigma_{\tau}=10$ and set $\gamma=10$ as initial values.
In the experiments,
we first randomly split the original datasets of $J$ groups into the training data of the remaining $\lfloor J / 2 \rfloor$ groups and the test data of $J-\lfloor J / 2 \rfloor$ groups 20 times.
We then obtain 5000 posterior samples using PyMC \citep{pymc},
calculate the empirical errors and the generalisation error estimates using the training data set,
and calculate the average of generalisation errors using the test data set.

Figure \ref{fig:BHM_datascale} displays the generalisation error estimates relative to the average generalisation errors along with the change of the data scaling factor $\delta_{X}$.
Figure \ref{fig:BHM_priorscale} displays those along with the change of the prior scaling factor $\delta_{\gamma}$.
As the data scaling factor becomes largers, all esitmates deteriorate (becomes far from the generalisation error) but the deteriorating rates of IS-CV and PCIC are slower than that of PSIS-CV. Fro any prior scaling factor, IS-CV and PCIC are closer to the generalisation error than PSIS-CV. GDIC is a relatively good metric for the meta-analysis data but not for the 8-school data.

\subsection{Application to Bayesian regression in the presence of influential observations}
\label{subsec: influential observations}

We next apply the proposed method to Bayesian regression the presence of influential observations.
The presence of influential observations impacts the variability of the case-deletion importance sampling weights, as discussed in \cite{Peruggia_1997}, which results in the instability of IS-CV.

\begin{figure}[h]
    \centering
    \includegraphics[width=120mm]{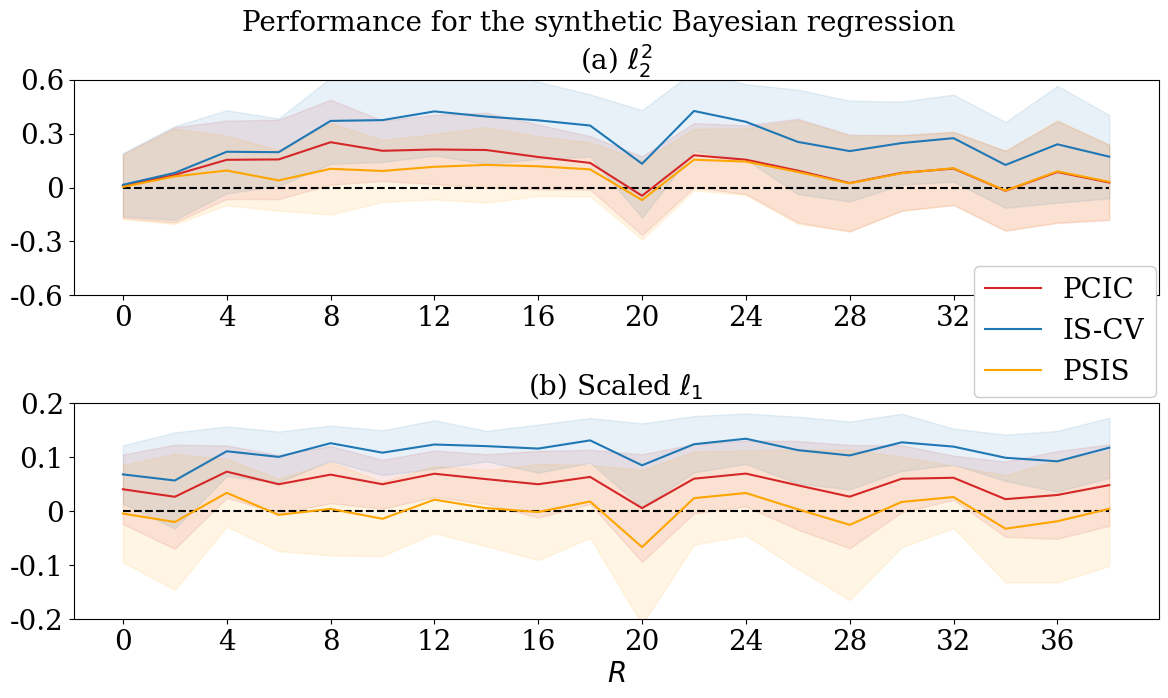}
    \caption{
    Performance of $\mathrm{PCIC}_{\mathrm{G}}$, IS-CV, and PSIS-CV for the synthetic Bayesian regression in the presence of influential observations.
    The vertical axes present values relative to the generalisation errors,
    while the horizontal axes correspond to the magnitude of the outlier.
    Averages are computed across different experiments, with colored shades representing $\pm \sigma$.
    (a) results for $\ell_{2}^{2}$ loss;
    (b) those for scaled $\ell_{1}$ loss.
    }
    \label{fig:outlier}
\end{figure}

Here, we focus on the performance comparison between our method and IS-CV in 
Bayesian regression with influential observations. 
We work with the following Bayesian regression model in \cite{Peruggia_1997}: 
Let $n$ be the sample size and
let $X_{i}\in\mathbb{R}^{d}$ ($i=1,\ldots,n$)
be the $d$ covariates. Then,
\begin{equation}
\begin{split}
Y_{i} \mid X_{i}, \beta &\sim \mathcal{N}(
X_{i}^{\top} \beta,\sigma), \,i=1,\ldots,n,\\
    \beta\mid\Sigma &\sim \mathcal{N}(0,\Sigma),\\
    \sigma^{2}      &\sim \mathrm{IG}(a_{\sigma},b_{\sigma}),\\
    \Sigma          &\sim \mathrm{IW}(\kappa\,I_{d\times d},\kappa),\label{eq:Bayesian linear regression}
    \end{split}
\end{equation}
where 
$\mathrm{IG}$ denotes the inverse gamma distribution,
$\mathrm{IW}$ denotes the inverse Wishart distribution,
and 
$I_{d\times d}$ denotes the $d\times d$ identity matrix.
Three quantities $\sigma_{a},\,\sigma_{b},\,\kappa$ are hyperparameters, and
we fix $\sigma_{a}=\sigma_{b}=\kappa=1$.
The score function $s(x,\theta)$ in this subsection is simply the log-likelihood.
For the loss functions, we use 
$\ell_{2}$, $\ell_{1}$, scaled $\ell_{2}$ and scaled $\ell_{1}$ losses:
\begin{align*}
    \nu_{\ell_{2}}(Y_{i},X_{i},\beta,\sigma)&=|Y_{i}-X_{i}^{\top}\beta|^{2}
    \quad\text{and}\quad
    \nu_{\ell_{1}}(Y_{i},X_{i},\beta,\sigma)=|Y_{i}-X_{i}^{\top}\beta|,\\
    \nu_{\text{scaled}\, \ell_{2}}(Y_{i},X_{i},\beta,\sigma)&=|Y_{i}-X_{i}^{\top}\beta|^{2}/\sigma^{2},
    \quad\text{and}\quad
    \nu_{\text{scaled}\, \ell_{1}}(Y_{i},X_{i},\beta,\sigma)=
    |Y_{i}-X_{i}^{\top}\beta|/\sigma.
\end{align*}

We begin with the following synthetic Bayesian regression model: for a given $R$,
$\tilde{X}_{1},\ldots,\tilde{X}_{n}$ are fixed to
\begin{align*}
    \tilde{X}_i &= \begin{cases}
    0.01i & \text{if $i<n$},\\
    R & \text{if $i=n$},
           \end{cases}
\end{align*}
and then $Y_{1},\ldots,Y_{n}$ are given as
\begin{align*}
    Y_i &= \beta_0 + \tilde{X}_i \beta_1 + \sigma\varepsilon_i,
\end{align*}
where $\varepsilon_i$s are i.i.d.~from the standard Gaussian distribution, and $\beta_{0}$, $\beta_{1}$, and $\sigma$ are unknown parameters. We assign (0,1,1) to the true values of $(\beta_{0},\beta_{1},\sigma)$. 
We set 50 to the sample size $n$. 
For each $R$,
we simulate the training data set 50 times and obtain 50 values of $\mathrm{PCIC}_{\mathrm{G}}$, PSIS-CV, and IS-CV.
We then calculate the average of the generalisation errors using a test data set with a sample size of 10. 
 By using the same Gibbs sampler as in \cite{Peruggia_1997},
we obtained 2000 MCMC samples after thinning out by 10 and a burnin period of length 10000.

Figure \ref{fig:outlier} displays the comparison between $\mathrm{PCIC}_{\mathrm{G}}$ and IS-CV.
Consider $\nu_{\ell_{2}}$.
For $R \le 2$, 
the performance of the two methods does not differ.
For $R \ge 3$,
the bias of IS-CV relative to the average of the generalisation errors, is larger than that of $\mathrm{PCIC}_{\mathrm{G}}$.
Consider $\nu_{\mathrm{scaled}\,\ell_{1}}$.
In this case, 
for all $R$,
the bias of IS-CV, relative to the average of the generalisation errors, is larger than those of $\mathrm{PCIC}_{\mathrm{G}}$ and PSIS-CV.
In particular, PSIS-CV performs the best for this loss.

We next employ two real datasets containing influential observations.
One is the the stack loss data:
The data consists of $n=21$ days of measurements with three condition records $x_{i}$.
The other is the Gesell data:
This consists of $n=21$ children's records of the age $x_{i}$ at which they first spoke 
and their Gesell Adaptive Score $y_{i}$. These datasets are known to contain influential observations; 
the indices for influential observations are
$i=21$ and $i=18$, respectively.
In the experiments,
we first randomly split the original datasets into the training data of $\lfloor n / 2 \rfloor$ samples and the test data of $n-\lfloor n / 2 \rfloor$ samples 20 times.
We then obtain posterior samples with different sample sizes,
calculate the empirical errors and the generalisation error estimates using the training data set,
and calculate the average of generalisation errors using the test data set.

Figures \ref{fig:stackloss} and \ref{fig:Gesell} display the results.
Overall, 
$\mathrm{PCIC}_{\mathrm{G}}$ and IS-CV works better than PSIS-CV for the unscaled losses ($\ell_{1}$ and $\ell_{2}$ losses),
whereas PSIS-CV works the best for the scaled losses.
One of the reasons is that 
the scaled losses works similar (in fact, the scaled $\ell_{2}$ loss is the half of log likelihood),
and the weight smoothing in PSIS-CV only depends only on log-likelihood.

\subsection{Application to eliminating biases due to strong priors}
\label{subsec: bias reduction}

When using strong priors, 
WAIC is shown to have a bias in the generalisation error estimation
(e.g., \cite{Vehtari_etal_2017,Ninomiya_2021}).
We here employ $\mathrm{PCIC}_{\rm G}$ to eliminate this bias.
We begin with focusing on a simple example that enables a full analytic calculus and then provide a general scheme applicable to general models.

\textbf{Analytic calculus in a location-shift model}: 
Consider a simple location-shift model, where the observations $X^{n} = (X_{1},\ldots,X_{n})$ follow 
\begin{align*}
    X_{i}=\theta^{*}+\varepsilon_{i},\,\,i=1,\ldots,n.
\end{align*}
Here,
$\theta^{*}$ is a vector in $\mathbb{R}^{d}$,
and
the error terms $\varepsilon_{1},\ldots,\varepsilon_{n}$ are i.i.d.~from a possibly non-Gaussian distribution with mean zero and covariance matrix identity matrix.
Consider the generalised posterior distribution given by
\begin{align*}
    \pi(\theta; X^{n})\propto \exp{\left\{-\beta\frac{\sum_{i=1}^{n}\|X_{i}-\theta\|^{2}}{2} - \frac{\|\theta\|^2}{2\tau} \right\}} \quad \text{with}\quad 
    \beta>0\,\text{and}\,\tau>0,
\end{align*}
where $\|\cdot\|$ is the $\ell_{2}$ norm in $\mathbb{R}^{d}$;
then, the generalised posterior distribution is $\mathcal{N}(\hat{\theta},S)$ with
\begin{align*}
    \hat{\theta}=\frac{n\beta\tau}{n\beta\tau+1}\overline{X}
    \quad\text{and}\quad
    S=\frac{1}{n\beta+1/\tau}I_{d}
\end{align*}
and the score function in this case is $s(x,\theta)=-(\beta/2)\|x-\theta\|^{2}$,
where
let $\overline{X}:=\sum_{i=1}^{n}X_{i}/n$ and let $I_{d}$ be the $d\times d$ identity matrix.

Consider the evaluation function $\nu(x,\theta)=(x-\theta)^{\top}A(x-\theta)$ with a symmetric positive definite matrix $A\in\mathbb{R}^{d\times d}$.
Thus,
the Gibbs generalisation gap is given by
\begin{align}
    \Ep[\mathcal{G}_{\mathrm{G},n}]-\Ep[\mathcal{E}_{\mathrm{G},n}]=\frac{2}{n}\frac{n\beta\tau}{n\beta\tau+1}\mathrm{tr}(A),
    \label{eq: case study gap}
\end{align}
while the posterior covariance is given by
\begin{align}
    \frac{1}{n}\sum_{i=1}^{n}\Covpos[\nu(X_{i},\theta),s(X_{i},\theta)]
    =-\frac{2}{n}\frac{n\beta\tau}{n\beta\tau+1}
    \frac{\sum_{i=1}^{n}\widetilde{X}_{i}^{\top}
    A\widetilde{X}_{i}}{n}
    -
    \frac{\beta}{(n\beta+1/\tau)^{2}}\mathrm{tr}(A),
    \label{eq: case study posterior covariance}
\end{align}
where $\widetilde{X}_{i}:=X_{i}-\hat{\theta}$ and the detailed calculi are given in Appendix \ref{appendix: detailed calculi}.
Thus, we obtain
\begin{align}
    &\Ep[\mathcal{G}_{\mathrm{G},n}]-\Ep[\mathcal{E}_{\mathrm{G},n}]
    \nonumber\\
    &=-\Ep\left[\frac{1}{n}\sum_{i=1}^{n}\Covpos[\nu(X_{i},\theta),s(X_{i},\theta)]\right]
    +\frac{2}{n}\frac{n\beta\tau}{n\beta\tau+1}
    \frac{(\theta^{*})^{\top}A\theta^{*}}{(n\beta\tau+1)^{2}}
    +\mathrm{rem},
    \label{eq: case study difference}
\end{align}
where we let
\begin{align*}
    \mathrm{rem}=
    \frac{\beta\mathrm{tr}(A)}{(n\beta+1/\tau)^{2}}
    +\left\{\left(\frac{n\beta\tau}{n\beta\tau+1}\right)^{3}-2\left(\frac{n\beta\tau}{n\beta\tau+1}\right)^{2}\right\}\frac{2\mathrm{tr}(A)}{n^{2}}.
\end{align*}

\begin{figure}[h]
    \centering
    \includegraphics[width=110mm]{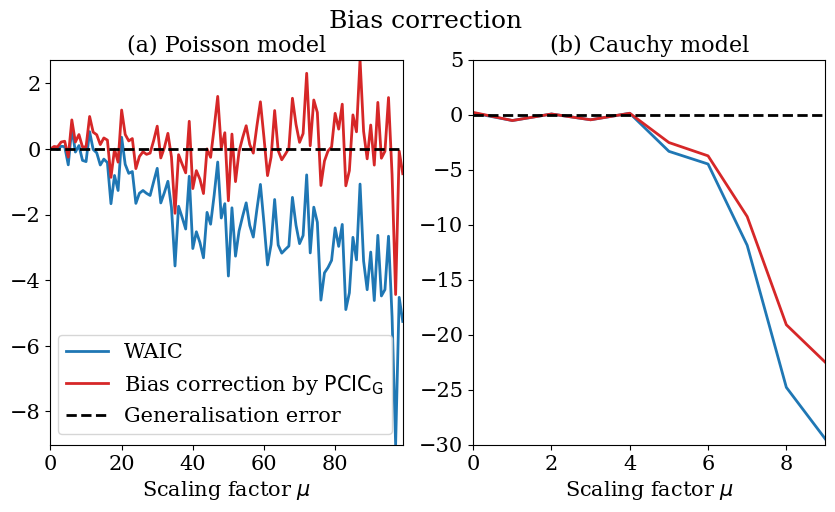}
    \caption{Bias correction of WAIC for the Poisson model (a) and the multivariate Cauchy model (b). The vertical axes denote the values relative to the generalisation errors, while the holizontal axes denote the scaling factor to the true values.}
    \label{fig:bias_correction}
\end{figure}

From (\ref{eq: case study difference}), we conclude that
in simple location-shift models,
under the assumption
\begin{align}
(\theta^{*})^{\top}A\theta^{*}/(n\beta\tau+1)^{2}=o(1),
\label{eq: assumption bias}
\end{align}
even the naive
$\mathrm{PCIC}_{\mathrm{G}}$ estimates well the Gibbs generalisation error regardless of the dimension $d$ and the distribution of the error term.
For non-strong priors ($\tau=O(1)$),
we can make the assumption in (\ref{eq: assumption bias}).
For strong priors ($1/\tau \sim n$),
we cannot expect the validity of the assumption, and the bias term \[\frac{2}{n}\frac{n\beta\tau}{n\beta\tau+1}
    \frac{(\theta^{*})^{\top}A\theta^{*}}{(n\beta\tau+1)^{2}}\]
remains.
Our generalised Bayesian framework provides a simple modification to remove this bias. Add the log-prior density (up to constant) divided by $n$ to $s(x,\theta)$;
\begin{align*}
s'(x,\theta)
&=s(x,\theta) + (1/n)\log \pi(\theta)\\
&=s(x,\theta)-(1/n)\|\theta\|^{2}/(2\tau) -(d/2)\log (2\pi \tau).
\end{align*}
Then, we have
\begin{align}
    \Ep[\mathcal{G}_{\mathrm{G},n}]
    -\Ep[\mathcal{E}_{\mathrm{G},n}]=-\Ep\left[\frac{1}{n}\sum_{i=1}^{n}\Covpos[\nu(X_{i},\theta),s'(X_{i},\theta)]\right]+\mathrm{rem}_{2}
    \label{eq: case study modification}
\end{align}
with $\mathrm{rem}_{2}$ an $O(n^{-2})$-term that is independent of $\theta^{*}$, which implies that 
the bias term in the naive $\mathrm{PCIC}_{\mathrm{G}}$ can be removed 
by changing the naive average posterior covariance to
\[
\frac{1}{n}\sum_{i=1}^{n}\Covpos\left[\nu(X_{i},\theta)\,,\,
s(X_{i},\theta)+\frac{1}{n}\log \pi(\theta)\right].
\]

\textbf{A scheme for general models}:
The above scheme is applicable in eliminating the bias of $\mathrm{WAIC}_{2}$ in general models by changing the average posterior variance to
\begin{align*}
    \frac{1}{n}\sum_{i=1}^{n}\Covpos\left[\log p(X_{i}\mid\theta)\,,\,\log p(X_{i}\mid\theta)+\frac{1}{n}\log \pi(\theta) \mid X^{n}\right].
\end{align*}
We check the effectiveness of this scheme by using the Poisson model and the Cauchy model.
Figure \ref{fig:bias_correction} demonstrates the results for the Poisson model with the conjugate Gamma prior
and the multivariate Cauchy model with the Cauchy prior; that is,
the observation and the prior in the Poisson model are described as
\[
X \sim \mathrm{Po}(\lambda_{0}) \quad \text{and}\quad  
\lambda \sim \mathrm{Gamma}(1,1),
\]
while those in the Cauchy model are described as
\[
X\sim \mathrm{Multivariate}\,\mathcal{C}(\mu_{0} \,,\,I_{5\times 5})
\quad \text{and}\quad
\mu \sim \mathrm{Multivariate}\,\mathcal{C}(0\,,\,I_{5\times 5}).
\]
After setting $\lambda_{0}=\mu$ and $\mu_{0}=\mu\mathbbm{1}$,
we vary the scaling factor $\mu$.
In the Poisson conjugate model,
the bias correction successfully works as depicted in Figure \ref{fig:bias_correction} (a).
In the Cauchy model,
the bias correction does work but there appears an additional non-negligible bias. Note that such bias also appears with large data scaling in the application to Bayesian hierarchical models; see Figure \ref{fig:BHM_datascale}.

\section{Discussions}\label{sec:discussions}

In this section, we discuss the following three topics:
\begin{itemize}
    \item the applicability in high dimensions;
    \item the application to defining a case-influence measure; and
    \item possibility of different definitions of generalisation errors.
\end{itemize}

\subsection{Applicability in high dimensions}
\label{subsec:highdimension}

We first check the applicability of our methodologies as well as IS-CV and PSIS-CV in high dimensional models.
If we view the location-shift model 
in Subsection \ref{subsec: bias reduction}
from a different perspective, 
the arguments in that subsection suggest that PCIC can be applied even in high-dimensional settings. This does not contradict the results of \cite{Okuno_Yano},
where WAIC is shown to work even in overparameterised linear regression models.
However, since these fully rely on the model's structures, further experiments are required.

Here, 
we employ the Poisson sequence model, a canonical model for count-data analysis. 
Incorporating the high-dimensional structure with Poisson sequence model becomes important in recent count-data analysis
(\citealp{komaki2006b,datta2016,yanokanekokomaki2021,hamuraetal2022,paul2024}).
The working model here is 
\[
Y(i)=\begin{pmatrix}
Y_{1}(i)\\
Y_{2}(i)\\
\vdots\\
Y_{d}(i)
\end{pmatrix} \quad \mid \quad \lambda =  \begin{pmatrix}
\lambda_{1}\\
\lambda_{2}\\
\vdots\\
\lambda_{d}
\end{pmatrix}
\quad\sim\quad
\otimes_{j=1}^{d} \mathrm{Poisson}(\lambda_{j})
\quad (i=1,\ldots,n),
\]
where $i$ is an index for the observation, $j$ is an index for the coordinate,
and $\otimes$ denotes the product of measures.
In this model,
each observation $Y(i)$
given $\lambda$
follows from the $d$-dimensional independent Poisson distribution.
For example, in spatio-temporal count-data analysis,
$i$ may be the index for the year and $j$ may be the index for the observation site such as a district; see \cite{datta2016,yanokanekokomaki2021,hamuraetal2022} for the details.

\begin{figure}[h]
    \centering
    \includegraphics[width=120mm]{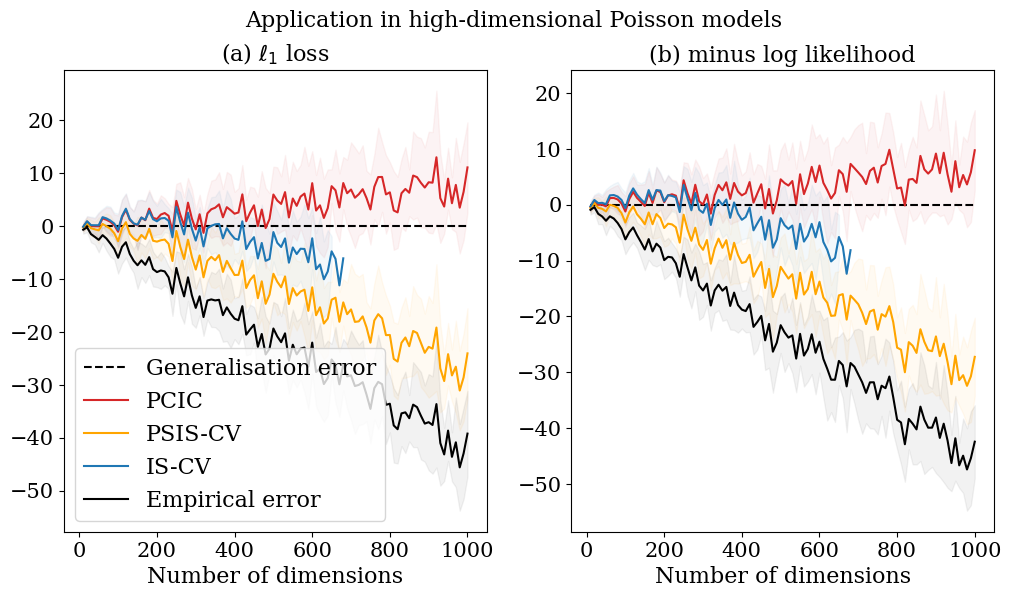}
    \caption{
    Application of $\mathrm{PCIC}_{\mathrm{G}}$, PSIS-CV, and IS-CV to high-dimensional Poisson models. Subfigure (a) displays the results for $\ell_{2}$ loss; Subfigure (b) displays the results for $\ell_{2}$ loss, where $\mathrm{PCIC}_{\mathrm{G}}$ is equal to $\mathrm{WAIC}_{2}$ in this case.
    }
    \label{fig: high dimension}
\end{figure}

\begin{figure}[h]
    \centering
    \includegraphics[width=100mm]{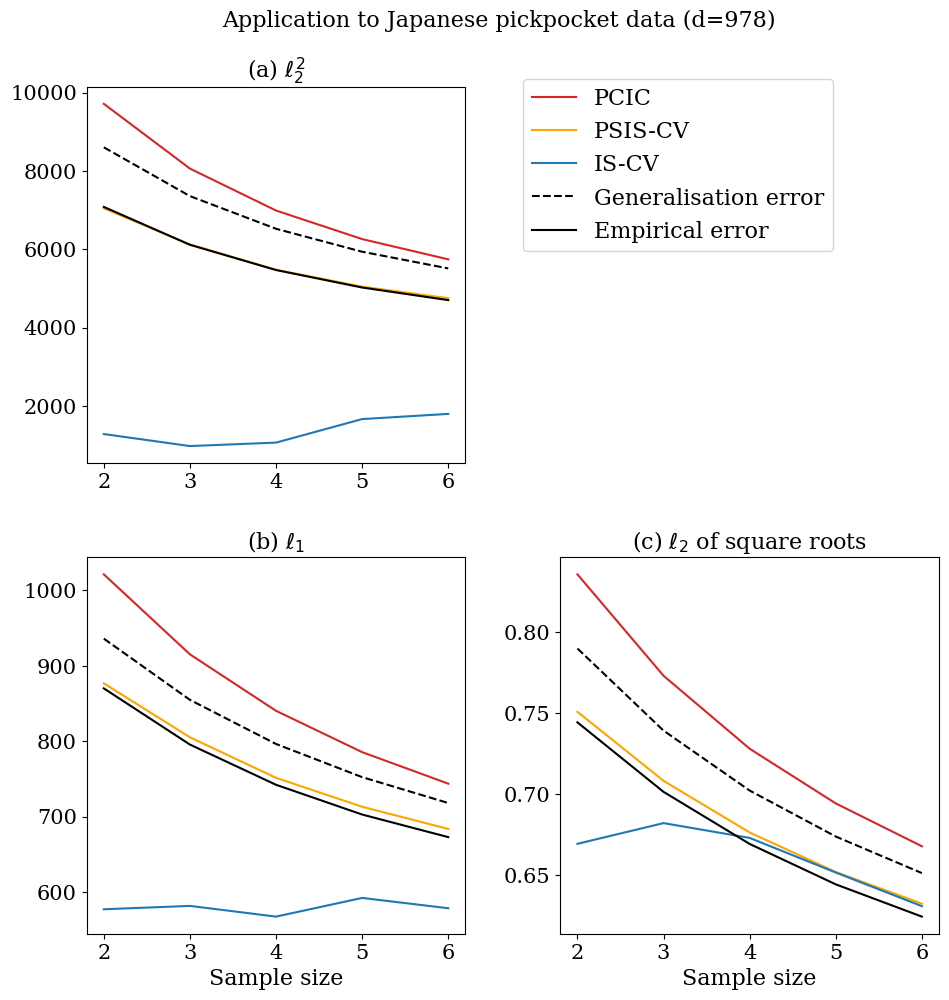}
    \caption{
    Application of $\mathrm{PCIC}_{\mathrm{G}}$, PSIS-CV, and IS-CV to the Japanese pickpocket data $(d=978)$. 
    The vertical axes exhibit values of the generalisation error estimates, while the horizontal axes depict the sample size.
    Subfigure (a) displays the results for $\ell_{2}^{2}$ loss; Subfigure (b) displays the results for $\ell_{1}$ loss;
    Subfigure (c) displays the results for $\ell_{1}$ loss of square roots ($\nu_{\mathrm{sq}}$).
    }
    \label{fig: real high dimension}
\end{figure}

We start with checking the behaviours of generalisation error estimates along with the change of the number $d$ of dimensions.
We fix the sample size $n$ to 10.
In this numerical experiment,
we work with the Stein-type shrinkage prior proposed by \cite{komaki2006b}:
\begin{align*}
\pi(\lambda)&= \frac{\lambda_{1}^{\beta_{1}-1}\cdots\lambda_{d}^{\beta_{d}-1}}{(\lambda_{1}+\cdots+\lambda_{d})^{\alpha}},
\end{align*}
where $\alpha>0$ and $\beta=(\beta_{1},\ldots,\beta_{d})$.
The reason of this prior choice is that 
it is non-informative, 
efficiently combines information across coordinates, 
and has an efficient exact sampling algorithm.
We set $\beta=(3,\ldots,3)$ and $\alpha=\sum_{j=1}^{d}\beta_{j}-1$.
For the true values of $\lambda_{j}$s, we set
\begin{align*}
\lambda_{j}&= 
\begin{cases}
0.001 &\text{if}\,\,j \,\,\text{is odd},\\
 2 &\text{if otherwise}
\end{cases}
\quad
(j=1,\ldots,d).
\end{align*}

Figure \ref{fig: high dimension} depicts the success of $\mathrm{PCIC}_{\mathrm{G}}$ even in high-dimensional models,
though we have not ensured it theoretically.
\cite{Okuno_Yano} shows the success of $\mathrm{WAIC}_{2}$ even in overparameterized linear regression models;
\cite{Giordano_Broderick_arXiv} discusses the Bayesian infinitesimal jackknife approximation of the variance in high dimensions.
These together with our result might deliver a new research direction focusing on understanding the behaviour of the posterior covariance in high-dimensions.
Understanding this might require further theoretical investigations on the approximation of cross validation in high dimensions (e.g., \citealp{RadandMaleki_2020})
and 
Bayesian central limit theorem in high-dimensions (e.g., \citealp{panov2015,yano2020,Kasprzaki2023}).

We next check the behaviours in the application to real data. We employ Japanese pickpocket data from \cite{pickpocket}. This data reports the total numbers of pickpockets in each year in Tokyo Prefecture, and are classified by town and also by the type of crimes. We use pickpocket data from 2012 to 2018 at 978 towns in 8 wards;
So, we work with the Poisson sequence model ($d=978$, $n= 7$; $n$ is the number of years we use in the analysis).
For $N=2,3,4,5,6$, 
we consider all combinations of selecting $N$ out of $n=7$ years, and use each combination as training data while treating the remaining items as test data. For each training and test pair, we calculate the generalisation error as well as its estimates ($\mathrm{PCIC}_{\mathrm{G}}$, PSIS-CV, IS-CV, and the empirical error), and take their averages.

Figure \ref{fig: real high dimension} displays the results for three loss functions: $\ell_{2}^{2}$ loss, $\ell_{1}$ loss, and
\[
\nu_{\mathrm{sr}}(Y(i),\lambda)=
\sum_{j=1}^{d}\left|\sqrt{Y_{j}(i)}-\sqrt{\lambda_{j}}\right|.
\]
For all loss functions, $\mathrm{PCIC}_{\mathrm{G}}$ estimates the generalisation error well.

\subsection{Application to defining a case-influence measure}

\begin{figure}[h]
    \centering
    \includegraphics[width=100mm]{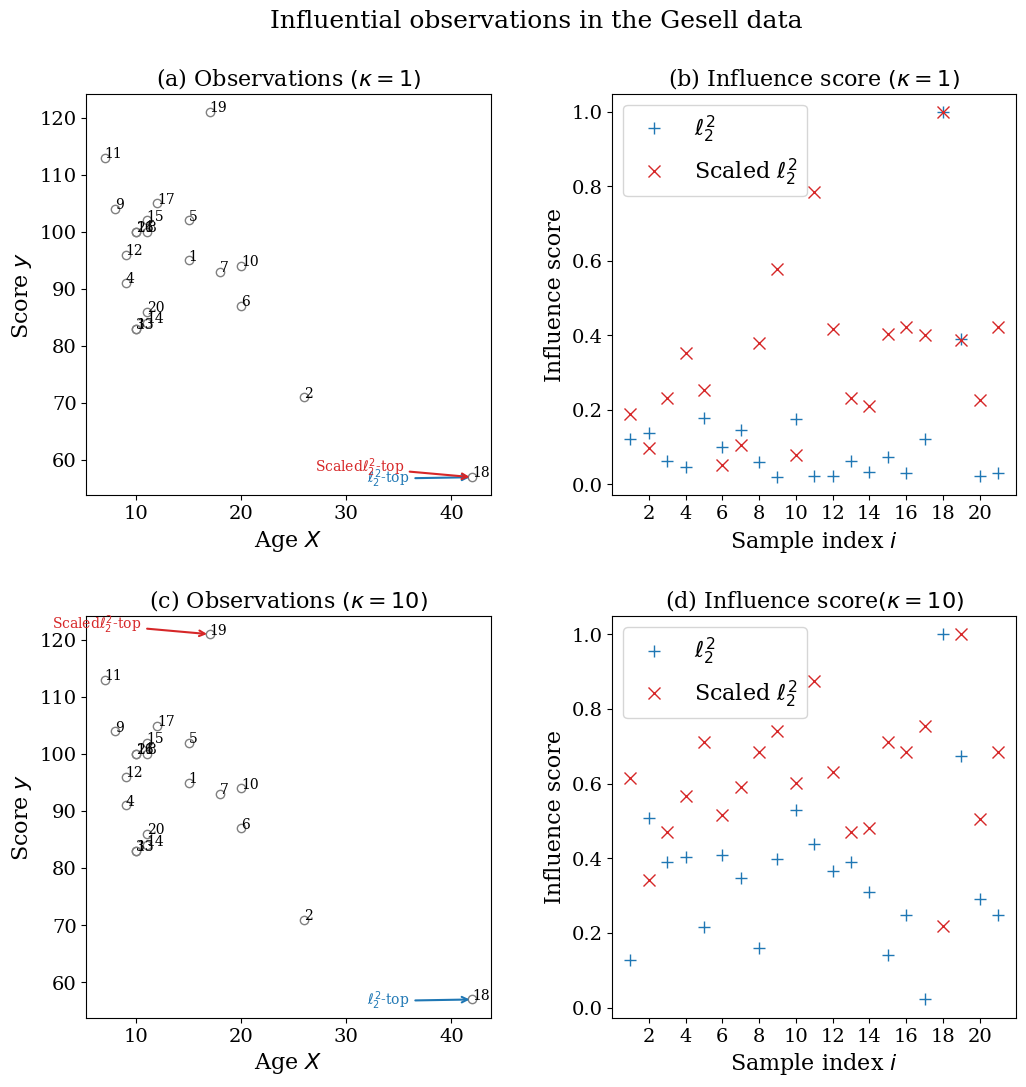}
    \caption{
    Influential observations in the Gesell data
    via $M_{\nu}$ for the generalisation. The values are divided by their maximum. 
    Subfigures (a) and (b) exhibit results for the posterior with the hyperparameter $\kappa$ in equation (\ref{eq:Bayesian linear regression}) set to be $1$.
    Subfigures (c) and (d) exhibit those for $\kappa$ in equation (\ref{eq:Bayesian linear regression}) set to $10$.
    }
    \label{fig:Gesell_sensitivity}
\end{figure}

We next consider an application of the proposed method to defining yet another case-influence measure.
From the perspective of estimating the generalisation error, 
the posterior covariance
\begin{align*}
    M_{\nu,i} := \Covpos[\nu(X_{i},\theta),s(X_{i},\theta)].
\end{align*}
expresses the contribution of each observation in filling the generalisation gap.
So, we can utilise it to define
yet another Bayesian case-influence measure in prediction.
When $s(\cdot,\cdot)$ and $\nu(\cdot,\cdot)$,
this coincides with the Bayesian local case sensitivity defined in \cite{MillarandStewart(2007),Thomas_MacEachern_Peruggia_2018}.
Note that this is not always positive except for $\nu(\cdot,\cdot)=s(\cdot,\cdot)$; so we standardize it and take the absolute value.

Figure \ref{fig:Gesell_sensitivity} showcases how the measure $M_{\nu}$ works in the Gesell data discussed in Subsection \ref{subsec: influential observations}, where we use the unscaled $\ell_{2}^{2}$ and the scaled $\ell_{2}^{2}$ losses,
and the score function $s(x,\theta)$ is log likelihood.
For relatively light-tailed prior ($\kappa=1$), the top-1 influential observations are the same for both losses (Figure \ref{fig:Gesell_sensitivity} a) .
For relatively heavy-tailed prior ($\kappa=10$), the top-1 influential observations are different (Figure \ref{fig:Gesell_sensitivity} c).

Here we compare our results with those in \cite{MillarandStewart(2007)}, where two types of sensitivities (parameter-scale and observation-scale) are defined.
First, our results with the unscaled loss are similar to the results with the observation-scale sensitivity in \cite{MillarandStewart(2007)},
and the results with the scaled loss
are similar to the results with the parameter-scale sensitivity in \cite{MillarandStewart(2007)};
Second, our results with the relatively light-tailed prior are similar to \cite{MillarandStewart(2007)}'s results in the known-variance case,
and our results with the relatively heavy-tailed prior are similar to
\cite{MillarandStewart(2007)}'s results 
the unknown-variance case with Bernard's reference prior.
Our result as well as \cite{MillarandStewart(2007)} and \cite{Thomas_MacEachern_Peruggia_2018} suggest that the case influences also vary depending on the choice of loss functions.
Also, our result as well as \cite{MillarandStewart(2007)} suggest that the case influences vary depending on the prior specification.

\subsection{Possibility of different definitions of generalisation errors}

Although we only consider the Gibbs and plug-in generalization errors, following the literature (e.g.,~\citealp{Ando_2007, Underhill_Smith_2016}), one could alternatively define the generalization error as
\[
\mathcal{G}_{\mathrm{S},n} 
= \Ep_{X_{n+1}} \Bigl[ S\bigl(X_{n+1},\, \Eppos[p(\cdot \mid \theta)]\bigr)\Bigr],
\]
where \(S(\cdot,\cdot)\) is a scoring rule \citep{Gneiting_2007, Dawid_etal_2016}, and \(\{p(x \mid \theta)\}\) is a parametric model. This may be viewed as a natural extension of the generalization error used in WAIC \citep{Watanabe_2010}.

In the binary prediction setting discussed in Subsection~\ref{subsec:privacy}, the plug-in generalization errors coincide with this definition. However, our current methodologies do not accommodate this scoring-rule-based generalisation error in more general prediction scenarios. In this paper, we have not pursued this direction because it remains unclear whether the use of the Bayesian predictive density \(\Eppos[p(\cdot \mid \theta)]\) is theoretically justified. While for the log score, the Bayesian predictive density is the Bayes solution \citep{Aitchison_1975}, it is not necessarily the Bayes solution for other loss functions (see, e.g., \citealp{Corcuera_Giummole_1999}).

Recently, a quasi-Bayesian framework associated with scoring rules has been developed \citep{Giummole_etal_2019,Matsubara_JASA}. Thus, extending our understanding of scoring-rule-based generalization errors in conjunction with such quasi-Bayesian theory would be an important future research direction.

\section{Conclusion} \label{sec:conclusion}

We have proposed a novel, computationally low-cost method of estimating the Gibbs generalisation errors and plugin generalisation errors for arbitrary loss functions.
We have demonstrated the usefulness of the proposed method 
in privacy-preserving learning,
Bayesian hierarchical modeling,
Bayesian regression modeling in the presence of the influential observations,
and
discussed the applicability in high dimensional statistical models.

Our numerical experiments suggest that IS-CV, PSIS-CV, and PCIC are useful tools for assessing generalisation errors.
However, when applied to high-dimensional models (Subsection \ref{subsec:highdimension}) or when the magnitude of the data is large (Subsection \ref{subsec:BHM}), PCIC remains effective even under such settings.
Detailed theoretical analyses remain an important direction for future research.

An important practical implication of this study is that the posterior covariance provides an easy-to-implement generalisation error estimate for arbitrary loss functions, and can avoid the cumbersome refitting in LOOCV as well as the importance sampling technique that is sensitive to the presence of influential observations.
Also, theoretical connections between WAIC, the Bayesian sensitivity analysis, and the infinitesimal jackknife approximation of Bayesian LOOCV are clarified. \revision{by our proof for the asymptotic unbiasedness}.

\section*{Acknowledgement}

The authors would like to thank the handling editor, the associate editor, and two anonymous referees for their comments that have improved the quality of the paper. 
Also, the authors would like to thank 
Akifumi Okuno, Hironori Fujisawa,
Yoshiyuki Ninomiya, and Yusaku Ohkubo for providing them with fruitful discussions.

\section*{Funding}
This work was supported by 
Japan Society for the Promotion of Science (JSPS) [Grant Nos. 19K20222, 21H05205, 21K12067],
the Japan Science and Technology Agency's Core Research for Evolutional Science and Technology (JST CREST) [Grant No. JPMJCR1763], and the MEXT Project for Seismology toward Research Innovation with Data of Earthquake (STAR-E) [Grant No. JPJ010217].
\reviseend

\section*{Code availability statements}

The python code is available at \url{https://github.com/kyanostat/PCIC4GL}.

\bibliographystyle{rss}
\bibliography{mcmcoptimism}

\begin{thebibliography}{68}
\expandafter\ifx\csname natexlab\endcsname\relax\def\natexlab#1{#1}\fi
\expandafter\ifx\csname url\endcsname\relax
  \def\url#1{\texttt{#1}}\fi
\expandafter\ifx\csname urlprefix\endcsname\relax\def\urlprefix{URL: }\fi

\bibitem[{Abadi et~al.(2016)Abadi, Chu, Goodfellow, McMahan, Mironov, Talwar
  and Zhang}]{Abadi_2016}
Abadi, M., Chu, A., Goodfellow, I., McMahan, H.~B., Mironov, I., Talwar, K. and
  Zhang, L. (2016) Deep learning with differential privacy.
\newblock \textit{Proceedings of the 2016 ACM SIGSAC Conference on Computer and
  Communications Security}, 308--318.

\bibitem[{Abril-Pla et~al.(2023)Abril-Pla, Andreani, Carroll, Dong, Fonnesbeck,
  Kochurov, Kumar, Lao, Luhmann, Martin et~al.}]{pymc}
Abril-Pla, O., Andreani, V., Carroll, C., Dong, L., Fonnesbeck, C.~J.,
  Kochurov, M., Kumar, R., Lao, J., Luhmann, C.~C., Martin, O.~A. et~al. (2023)
  Pymc: a modern, and comprehensive probabilistic programming framework in
  python.
\newblock \textit{PeerJ Computer Science}, \textbf{9}, e1516.

\bibitem[{Aitchison(1975)}]{Aitchison_1975}
Aitchison, J. (1975) Goodness of prediction fit.
\newblock \textit{Biometrika}, \textbf{62}, 547--554.

\bibitem[{Akaike(1973)}]{Akaike_1973}
Akaike, H. (1973) Information theory and an extension of the maximum likelihood
  principle.
\newblock In \textit{Proceedings of the 2nd Intertnational Symposium on
  Information Theory} (eds. B.~Petrov and F.~Cs\'{a}ki), 267--281.

\bibitem[{Ando(2007)}]{Ando_2007}
Ando, T. (2007) Bayesian predictive information criterion for the evaluation of
  hierarchical {B}ayesian and empirical {B}ayes models.
\newblock \textit{Biometrika}, \textbf{94}, 443--458.

\bibitem[{Beirami et~al.(2017)Beirami, Razaviyayn, Shahrampour and
  Tartokh}]{Beirami_etal_2017}
Beirami, A., Razaviyayn, M., Shahrampour, S. and Tartokh, V. (2017) On optimal
  generalizability in parametric learning.
\newblock In \textit{Advances in Neural Information Processing Systems},
  3458--3468.

\bibitem[{Bissiri et~al.(2016)Bissiri, Holmes and Walker}]{Bissiri_etal_2016}
Bissiri, P., Holmes, C. and Walker, S. (2016) A general framework for updating
  brief distributions.
\newblock \textit{Journal of the Royal Statistical Society: Series B
  (Statistical Methodology)}, \textbf{78}, 1103--1130.

\bibitem[{Chernozhukov and Hong(2003)}]{Chernozhukov_Hong_2003}
Chernozhukov, V. and Hong, H. (2003) An {MCMC} approach to classical
  estimation.
\newblock \textit{Journal of Econometrics}, \textbf{115}, 293--346.

\bibitem[{Corcuera and Giummol\`{e}(1999)}]{Corcuera_Giummole_1999}
Corcuera, J. and Giummol\`{e}, F. (1999) A generalized {B}ayes rule for
  prediction.
\newblock \textit{Scandinavian Journal of Statistics}, \textbf{26}, 265--279.

\bibitem[{Datta and Dunson(2016)}]{datta2016}
Datta, J. and Dunson, D. (2016) Bayesian inference on quasi-sparse count data.
\newblock \textit{Biometrika}, \textbf{103}, 971--983.

\bibitem[{Dawid et~al.(2016)Dawid, Musio and Ventura}]{Dawid_etal_2016}
Dawid, A., Musio, M. and Ventura, L. (2016) Minimum scoring rule inference.
\newblock \textit{Scandinavian Journal of Statistics}, \textbf{43}, 123--138.

\bibitem[{Desfontaines and Pej\'{o}(2020)}]{Desfontaines_Pejo_2019}
Desfontaines, D. and Pej\'{o}, B. (2020) Sok: Differential privacies.
\newblock \textit{Proceedings on Privacy Enhancing Technologies}, \textbf{2},
  288--313.

\bibitem[{Dua and Graff(2017)}]{UCI}
Dua, D. and Graff, C. (2017) {UCI} machine learning repository.
\newblock \urlprefix\url{http://archive.ics.uci.edu/ml}.

\bibitem[{Dwork(2006)}]{Dwork_2006}
Dwork, C. (2006) Differential privacy.
\newblock In \textit{Proceedings of the 33rd International conference on
  Automata, Languages and Programming, Part II}, 1--12.

\bibitem[{Geisser(1975)}]{Geisser_1975}
Geisser, S. (1975) The predictive sample reuse method with applications.
\newblock \textit{Journal of American Statistical Association}, \textbf{70},
  320--328.

\bibitem[{Gelfand et~al.(1992)Gelfand, Dey and Chang}]{Gelfand_Dey_Chang_1992}
Gelfand, A., Dey, D. and Chang, H. (1992) Model determination using predictive
  distributions with implementation via sampling-based methods.
\newblock In \textit{Bayesian Statistics} (eds. J.~Bernardo, J.~Berger,
  A.~Dawid and A.~Smith), 147--167. Oxford University Press.

\bibitem[{Gelman et~al.(2013)Gelman, Carlin, Stern, Dunson, Vehtari and
  Rubin}]{BDA3}
Gelman, A., Carlin, J., Stern, H., Dunson, D., Vehtari, A. and Rubin, D. (2013)
  \textit{Bayesian Data Analysis, 3rd Edition}.
\newblock Chapman and Hall/CRC.

\bibitem[{Giordano and Broderick(2023)}]{Giordano_Broderick_arXiv}
Giordano, R. and Broderick, T. (2023) The {B}ayesian infinitesimal jackknife
  for variance.
\newblock ArXiv:2305.06466.

\bibitem[{Giordano et~al.(2018)Giordano, Broderick and
  Jordan}]{Giordano_etal_2018}
Giordano, R., Broderick, T. and Jordan, M. (2018) Covariances, robustness, and
  variational {B}ayes.
\newblock \textit{Journal of Machine Learning Research}, \textbf{19}, 1--49.
\newblock \urlprefix\url{http://jmlr.org/papers/v19/17-670.html}.

\bibitem[{Giordano et~al.(2019)Giordano, Stephenson, Liu, Jornan and
  Broderick}]{Giordano_etal_2019}
Giordano, R., Stephenson, W., Liu, R., Jornan, M. and Broderick, T. (2019) A
  {S}wiss army infinitesimal jackknife.
\newblock \textit{Proceedings of the 22nd International Conference on
  Artificial Intelligence and Statistics (AISTATS) 2019}, \textbf{89}.

\bibitem[{Giummol\`{e} et~al.(2019)Giummol\`{e}, Mameli, Ruli and
  Ventura}]{Giummole_etal_2019}
Giummol\`{e}, F., Mameli, V., Ruli, E. and Ventura, L. (2019) Objective
  {B}ayesian inference with proper scoring rules.
\newblock \textit{Test}, \textbf{28}, 728--–755.

\bibitem[{Gneiting and Raftery(2007)}]{Gneiting_2007}
Gneiting, T. and Raftery, A. (2007) Strictly proper scoring rules, prediction,
  and estimation.
\newblock \textit{Journal of the American Statistical Association},
  \textbf{102}, 359--378.

\bibitem[{Hamura et~al.(2022)Hamura, Irie and Sugasawa}]{hamuraetal2022}
Hamura, H., Irie, K. and Sugasawa, S. (2022) On global-local shrinkage priors
  for count data.
\newblock \textit{Bayesian Analysis}, \textbf{17}, 545--564.

\bibitem[{Iba(2023)}]{Iba_Wkernel}
Iba, Y. (2023) {W}-kernel and essential subspace for frequencist's evaluation
  of {B}ayesian estimators.
\newblock ArXiv:2311.13017.

\bibitem[{Iba and Yano(2023)}]{Iba_Yano_arXiv}
Iba, Y. and Yano, K. (2023) Posterior covariance information criterion for
  weighted inference.
\newblock \textit{Neural computation}, \textbf{35}, 1--22.

\bibitem[{Jewson et~al.(2023)Jewson, Ghalebikesabi and
  Holmes}]{Jewson_etal_2023}
Jewson, J., Ghalebikesabi, S. and Holmes, C. (2023) Differentially private
  statistical inference through $\beta$-divergence one posterior sampling.
\newblock \textit{Proceedings of 37th Conference on Neural Information
  Processing Systems (NeurIPS 2023)}.

\bibitem[{Jiang and Tanner(2008)}]{Jiang_Tanner_2008}
Jiang, W. and Tanner, M. (2008) Gibbs posterior for variable selection in
  high-dimensional classification and data mining.
\newblock \textit{The Annals of Statistics}, \textbf{36}, 2207--2231.

\bibitem[{Kasprzaki et~al.(2023)Kasprzaki, Giordano and
  Broderick}]{Kasprzaki2023}
Kasprzaki, M., Giordano, R. and Broderick, T. (2023) How good is your {L}aplace
  approximation of the {B}ayesian posterior? finite-sample computable error
  bounds for a variety of useful divergences.
\newblock ArXiv:2209.14992.

\bibitem[{Koh and Liang(2017)}]{KohandLiang_2017}
Koh, P. and Liang, P. (2017) Understanding black-box predictions via influence
  functions.
\newblock In \textit{Proceedings of the 34th International Conference on
  Machine Learning}, 1885–1894.

\bibitem[{Komaki(2006)}]{komaki2006b}
Komaki, F. (2006) A class of proper priors for {B}ayesian simultaneous
  prediction of independent {P}oisson observables.
\newblock \textit{Journal of Multivariate Analysis}, \textbf{97}, 1815--1828.

\bibitem[{Konishi and Kitagawa(1996)}]{Konishi_Genshiro_1996}
Konishi, S. and Kitagawa, G. (1996) Generalised information criteria in model
  selection.
\newblock \textit{Biometrika}, \textbf{83}, 875--890.

\bibitem[{Mastubara et~al.(2022)Mastubara, Knoblauch, Briol and
  Oates}]{Matsubara_etal_2022}
Mastubara, T., Knoblauch, J., Briol, F. and Oates, C. (2022) Robust generalised
  {B}ayesian inference for intractable likelihoods.
\newblock \textit{Journal of the Royal Statistical Society: Series B
  (Statistical Methodology)}, \textbf{84}, 997--1022.

\bibitem[{Matsubara et~al.(2024)Matsubara, Knoblauch, Briol and
  Oates}]{Matsubara_JASA}
Matsubara, T., Knoblauch, J., Briol, F.-X. and Oates, C. (2024) Generalized
  bayesian inference for discrete intractable likelihood.
\newblock \textit{Journal of the American Statistical Association},
  \textbf{119}, 2345--2355.

\bibitem[{Millar and Stewart(2007)}]{MillarandStewart(2007)}
Millar, R. and Stewart, W. (2007) Assessment of locally influential
  observations in {Bayesian} models.
\newblock \textit{Bayesian Analysis}, \textbf{2}, 365--384.

\bibitem[{Miller and Dunson(2019)}]{Miller_Dunson_2019}
Miller, J. and Dunson, D. (2019) Robust {B}ayesian inference via coarsening.
\newblock \textit{Journal of the American Statistical Association},
  \textbf{114}, 1113–--1125.

\bibitem[{Minami et~al.(2016)Minami, Arai, Sato and Nakagawa}]{Minami_2016}
Minami, K., Arai, H., Sato, I. and Nakagawa, H. (2016) Differential privacy
  without sensitivity.
\newblock In \textit{Advances in Neural Information Processing Systems} (eds.
  D.~Lee, M.~Sugiyama, U.~Luxburg, I.~Guyon and R.~Garnett), vol.~29.

\bibitem[{Mironov(2017)}]{Mironov_2017}
Mironov, I. (2017) R\`{e}nyi differential privacy.
\newblock \textit{2017 IEEE 30th Computer Security Foundations Symposium
  (CSF)}, 263--275.

\bibitem[{Nakagawa and Hashimoto(2020)}]{Nakagawa_Hashimoto_2020}
Nakagawa, T. and Hashimoto, S. (2020) Robust {B}ayesian inference via
  $\gamma$-divergence.
\newblock \textit{Communications in Statistics - Theory and Methods},
  \textbf{49}, 343--360.

\bibitem[{Ninomiya(2021)}]{Ninomiya_2021}
Ninomiya, Y. (2021) Prior intensified information criterion.
\newblock ArXiv: 2110.12145.

\bibitem[{Okuno and Yano(2023)}]{Okuno_Yano}
Okuno, A. and Yano, K. (2023) A generalization gap estimation for
  overparameterized models via the {L}angevin functional variance.
\newblock \textit{Journal of Computational and Graphical Statistics},
  \textbf{32}, 1287--1295.

\bibitem[{Panov and Spokoiny(2015)}]{panov2015}
Panov, M. and Spokoiny, V. (2015) Finite sample {B}ernstein--von {M}ises
  theorem for semiparametric problems.
\newblock \textit{Bayesian Analysis}, \textbf{10}, 665--710.

\bibitem[{Paul and Chakrabarti(2024)}]{paul2024}
Paul, S. and Chakrabarti, A. (2024) Asymptotic bayes optiamlity for sparse
  count data.
\newblock \urlprefix\url{https://arxiv.org/abs/2401.05693}.

\bibitem[{P\'{e}rez et~al.(2006)P\'{e}rez, Martin and Rufo}]{Perez_etal_2006}
P\'{e}rez, C., Martin, J. and Rufo, M. (2006) {MCMC}-based local parametric
  sensitivity estimations.
\newblock \textit{Computational Statistics \& Data Analysis}, \textbf{51},
  823--835.

\bibitem[{Peruggia(1997)}]{Peruggia_1997}
Peruggia, M. (1997) On the variability of case-deletion importance sampling
  weights in the {B}ayesian linear model.
\newblock \textit{Journal of the Ametican Statistical Association},
  \textbf{92}.

\bibitem[{Polson et~al.(2013)Polson, Scott and Windle}]{polson2013}
Polson, N., Scott, J. and Windle, J. (2013) Bayesian inference for logistic
  models using {P}\'{o}lya–{G}amma latent variables.
\newblock \textit{Journal of the American Statistical Association},
  \textbf{108}, 1339--1349.

\bibitem[{Rad and Maleki(2020)}]{RadandMaleki_2020}
Rad, K. and Maleki, A. (2020) A scalable estimate of the extra-sample
  prediction error via approximate leave-one-out.
\newblock \textit{Journal of Royal Statistical Society: Series B}, \textbf{82},
  965--996.

\bibitem[{Ribatet et~al.(2012)Ribatet, Cooley and Davison}]{Ribatet_etal_2012}
Ribatet, M., Cooley, D. and Davison, A. (2012) Bayesian inference from
  composite likelihoods, with an application to spatial extremes.
\newblock \textit{Statistica Sinica}, \textbf{22}, 813--845.

\bibitem[{Shibata(1989)}]{Shibata_1989}
Shibata, R. (1989) \textit{Statistical aspects of model selection}, 215--240.
\newblock Springer-Verlag.

\bibitem[{Silva and Zanella(2023)}]{Silva_Zanella_2023}
Silva, L. and Zanella, G. (2023) Robust leave-one-out cross-validation for
  high-dimensional {B}ayesian models.
\newblock \textit{Journal of the American Statistical Association}.
\newblock In press.

\bibitem[{Spiegelhalter et~al.(2002)Spiegelhalter, Best, Carlin and van~der
  Linde}]{Spiegelhalter_etal_2002}
Spiegelhalter, D., Best, N., Carlin, B. and van~der Linde, A. (2002) Bayesian
  measures of model complexity and fit.
\newblock \textit{Journal of the Royal Statistical Society: Series B},
  \textbf{64}, 583--639.

\bibitem[{Stone(1974)}]{Stone_1974}
Stone, M. (1974) Cross-validatory choice and assessment of statistical
  predictions (with discussion).
\newblock \textit{Journal of the Royal Statistical Society: Series B},
  \textbf{36}, 111--147.

\bibitem[{Takeuchi(1976)}]{Takeuchi_1976}
Takeuchi, K. (1976) Distribution of informational statistics and a criterion of
  model fitting (in japanese).
\newblock \textit{Suri-Kagaku (Mathematic Sciences)}, \textbf{153}, 12–18.

\bibitem[{Thomas et~al.(2018)Thomas, MacEachern and
  Peruggia}]{Thomas_MacEachern_Peruggia_2018}
Thomas, Z., MacEachern, S. and Peruggia, M. (2018) Reconciling curvature and
  importance sampling based procedures for summarizing case influence in
  {B}ayesian models.
\newblock \textit{Journal of the American Statistical Association},
  \textbf{113}, 1669--1683.

\bibitem[{{Tokyo~Metropolitan~Police~Department}(2025)}]{pickpocket}
{Tokyo~Metropolitan~Police~Department} (2025) The number of crimes in tokyo
  prefecture by town and type.
\newblock
  Https://www.keishicho.metro.tokyo.lg.jp/about\_mpd/jokyo\_tokei/jokyo/.

\bibitem[{Underhill and Smith(2016)}]{Underhill_Smith_2016}
Underhill, N. and Smith, J. (2016) Context-dependent score based {B}ayesian
  information criteria.
\newblock \textit{Bayesian Analysis}, \textbf{11}, 1005--1033.

\bibitem[{Vehtari(2002)}]{Vehtari_2002}
Vehtari, A. (2002) Discussion of``{B}ayesian measures of model complexity and
  fit" by spiegelhalter et al.
\newblock \textit{Journal of Royal Statistical Society B}, \textbf{64}, 620.

\bibitem[{Vehtari et~al.(2017)Vehtari, Gelman and Gabry}]{Vehtari_etal_2017}
Vehtari, A., Gelman, A. and Gabry, J. (2017) Practical {B}ayesian model
  evaluation using leave-one-out cross-validation and {WAIC}.
\newblock \textit{Statistics and Computing}.

\bibitem[{Vehtari and Lampinen(2003)}]{VehtariandLampinen_2003}
Vehtari, A. and Lampinen, J. (2003) Expected utility estimation via
  cross-validation.
\newblock In \textit{Bayesian statistics} (eds. J.~Bernardo, M.~Bayarri,
  J.~Berger, A.~Dawid, D.~Heckerman, A.~Smith and M.~West). Oxford University
  Press.

\bibitem[{Vehtari and Ojanen(2012)}]{Vehtari_Ojanen_2012}
Vehtari, A. and Ojanen, J. (2012) A survey of {B}ayesian predictive methods for
  model assessment, selection and comparison.
\newblock \textit{Statistics Surveys}, \textbf{6}, 142--–228.

\bibitem[{Vehtari et~al.(2024)Vehtari, Simpson, Gelman, Yao and
  Gabry}]{Vehtari_PSIS_2024}
Vehtari, A., Simpson, D., Gelman, A., Yao, Y. and Gabry, J. (2024) Pareto
  smoothed importance sampling.
\newblock \textit{Journal of Machine Learning Research}, \textbf{25}, 1--58.

\bibitem[{Wang et~al.(2015{\natexlab{a}})Wang, Fienberg and
  Smola}]{wang_PMLR_2015}
Wang, Y.-X., Fienberg, S. and Smola, A. (2015{\natexlab{a}}) Privacy for free:
  Posterior sampling and stochastic gradient {M}onte {C}arlo.
\newblock In \textit{Proceedings of the 32nd International Conference on
  Machine Learning}, vol.~37, 2493--2502.

\bibitem[{Wang et~al.(2015{\natexlab{b}})Wang, Fienberg and
  Smola}]{Wang_etal_2015}
--- (2015{\natexlab{b}}) Privacy for free: Posterior sampling and stochastic
  gradient {M}onte {C}arlo.
\newblock \textit{Proceeding of the 32nd {I}nternational {C}onference on
  {M}achine {L}earning (ICML'15)}, 2493--2502.

\bibitem[{Watanabe(2010{\natexlab{a}})}]{Watanabe_2010_b}
Watanabe, S. (2010{\natexlab{a}}) Asymptotic equivalence of {B}ayes cross
  validation and widely applicable information criterion in singular learning
  theory.
\newblock \textit{Journal of Machine Learning Research}, \textbf{11},
  3571--3594.

\bibitem[{Watanabe(2010{\natexlab{b}})}]{Watanabe_2010}
--- (2010{\natexlab{b}}) Equations of states in singular statistical
  estimation.
\newblock \textit{Neural Networks}, \textbf{23}, 20–34.

\bibitem[{Watanabe(2018)}]{Watanabe_book_mtbs}
--- (2018) \textit{Mathematical Theory of {B}ayesian Statistics}.
\newblock Chapman \& Hall/CRC.

\bibitem[{Yano et~al.(2021)Yano, Kaneko and Komaki}]{yanokanekokomaki2021}
Yano, K., Kaneko, R. and Komaki, F. (2021) Minimax predictive density for
  sparse count data.
\newblock \textit{Bernoulli}, \textbf{27}, 1212--1238.

\bibitem[{Yano and Kato(2020)}]{yano2020}
Yano, K. and Kato, K. (2020) On frequentist coverage errors of {B}ayesian
  credible sets in moderately high dimensions.
\newblock \textit{Bernoulli}, \textbf{26}, 616--641.

\bibitem[{Yusuf et~al.(1985)Yusuf, Pero, Lewis, Collins and
  Sleight}]{Yusuf_etal_1985}
Yusuf, S., Pero, R., Lewis, J., Collins, R. and Sleight, P. (1985) Beta
  blockade during and after myocardial infarction: an overview of the
  randomized trials.
\newblock \textit{Progress in Cardiovascular Diseases}, \textbf{27}, 335--371.

\end{thebibliography}

\appendix

\setcounter{equation}{0} 
\setcounter{figure}{0} 
\renewcommand{\theequation}{\thesection.\arabic{equation}}
\renewcommand{\thefigure}{\thesection.\arabic{figure}}

\section{Conditions for the theorem}\label{appendix: assumptions}

The conditions in this paper are as follows:
In the conditions,
$\nu$ is redefined by subtracting 
$\min_{\theta'\in\Theta}\Ep[\nu(X_{n+1},\theta')]$ from the original $\nu$.
\begin{enumerate}
    \item[(C1)]
    The difference
    $\{\Ep_{X^{n-1}}[\mathcal{G}_{\mathrm{G},n-1}]-\Ep[\mathcal{G}_{\mathrm{G},n}]\}$ is of the order $o(\Ep[\mathcal{G}_{\mathrm{G},n}])$;
    \item[(C2)] The following relation holds:
    \begin{align*}
        &\Ep\left[
        \sup_{w\in \mathcal{W}^{(-1)}}
        \left|
        \Eppos^{w}\left[
        \left\{\nu(X_{1},\theta)-\Eppos^{w}[\nu(X_{1},\theta)]\right\}
    \left\{s(X_{1},\theta)-\Eppos^{w}[s(X_{1},\theta)]\right\}^{2}
        \right]
        \right|
        \right]\\
        &=o(\Ep[\mathcal{G}_{\mathrm{G},n}]);
    \end{align*}
    \item[(C3)] There exists an integrable function $M(\theta)$ such that
    for all $w\in \mathcal{W}^{(-1)}$, 
    \begin{align*}
    &\left| \pi(\theta)
    \exp\left\{
    w_{1}s(X_{1},\theta)+
    \sum_{k\neq 1}s(X_{k},\theta) \right\}
    \nu^{l}(X_{1},\theta)
    s^{j}(X_{1},\theta)\right|\\
    &\le M(\theta),\quad l=0,1,\,j=0,1.
    \end{align*}
\end{enumerate}

\begin{remark}[Discussion on the conditions]
First, consider Condition C1.
In regular statistical models and smooth evaluation functions,
the result of \cite{Underhill_Smith_2016} implies \revision{that the order of $\Ep[\mathcal{G}_{G,n}]$ is $1/n$.}{}
that the expected generalisation error subtracted by its minimum  with respect to the parameter
    \[
    \tilde{\nu}(X_{i},\theta)
    =\nu(X_{i},\theta)-\min_{\theta'\in\Theta}\Ep[\nu(X_{n+1},\theta')].
    \]
is of order $1/n$.
We briefly discuss on it.
Let $\theta^{*}$ be the minimizer of $\Ep[\nu(X_{n+1},\theta')]$. 
The Taylor expansion yields
    \begin{align*}
    \Eppos[\nu(X_{n+1},\theta)\mid X^{n}]
    &=
    \Eppos[\nu(X_{n+1},\theta^{*})\mid X^{n}]+
    \nabla_{\theta^{*}}\nu(X_{n+1},\theta)
    \Eppos[\theta-\theta^{*}\mid X^{n}]\\
    &\, +\frac{1}{2}
    \Eppos[(\theta-\theta^{*})^{\top}\nabla_{\theta^{*}}^{2}
    \nu(X_{n+1},\theta)(\theta-\theta^{*})\mid X^{n}] + o_{P}(\|\theta-\theta^{*}\|^{2}).
    \end{align*}
    Thus, we get
    \begin{align*}
    \Eppos[\tilde{\nu}(X_{n+1},\theta)\mid X^{n}]
    &=
    \underbrace{\nabla_{\theta^{*}}\nu(X_{n+1},\theta)
    \Eppos[\theta-\theta^{*}\mid X^{n}]}_{=:T_{1}} \\
    &\, +\frac{1}{2}
    \underbrace{\Eppos[(\theta-\theta^{*})^{\top}\nabla_{\theta^{*}}^{2}
    \nu(X_{n+1},\theta)(\theta-\theta^{*})\mid X^{n}]}_{=:T_{2}} + o_{P}(\|\theta-\theta^{*}\|^{2}).    
    \end{align*}
    The order of the term $T_{1}$ is controlled by the difference between the posterior mean and $\theta^{*}$; so it is of order $1/n$.
    The order of the term $T_{2}$ is determined by Lemma 3 of Underhill and Smith (2016);
    so it is of order $1/n$,
    which concludes the order of the expected (subtracted) generalisation error.
    For singular statistical models and log-likelihoods,
the result of \cite{Watanabe_2010}
implies that the order of $\Ep[\mathcal{G}_{G,n}]$ (subtracted by its minimum) is $1/n$. Thus, Condition C1 usually holds.

Condition C2 is a condition for the residual. For regular statistical models and smooth evaluation functions,
the Taylor expansion implies that
the order of the left hand side is $n^{-3/2}$ and less than the order of $\Ep[\mathcal{G}_{G,n}]$. 
For singular statistical models and log-likelihoods,
we refer to \cite{Watanabe_book_mtbs}.
Finally, Condition C3 is a mild condition for assuming the existence of expectation.
\end{remark}

\section{Proof for Lemma \ref{lem:sensitivity}}\label{appendix: proof of lemma}
This appendix provides the proof of Lemma \ref{lem:sensitivity}.

The Lebesgue dominated convergence theorem ensures the exchange of differentiation and integration in $(\partial^{k}/\partial w_{i}^{k})\Eppos^{w}[\nu(X_{i},\theta)]$
under Condition C3. Then, considering 
\[
F(w):=\int \exp\left\{\sum_{i=1}^{n}w_{i}s(X_{i},\theta)\right\}\pi(\theta)d\theta
\]
gives
\begin{align*}
    \frac{\partial}{\partial w_{i}}\Eppos^{w}[\nu(X_{i},\theta)]
    &=
    \frac{\int s(X_{i},\theta)\nu(X_{i},\theta)\mathrm{e}^{\sum_{i=1}^{n}w_{i}s(X_{i},\theta)}\pi(\theta)d\theta F(w)
    }
    {F^{2}(w)}\\
    &\quad -
    \frac{\int s(X_{i},\theta)\mathrm{e}^{\sum_{i=1}^{n}w_{i}s(X_{i},\theta)}\pi(\theta)d\theta
    \int \nu(X_{i},\theta)\mathrm{e}^{\sum_{i=1}^{n}w_{i}s(X_{i},\theta)}\pi(\theta)d\theta}{F^{2}(w)},
\end{align*}
which implies
\begin{align*}
        \frac{\partial}{\partial w_{i}}
    \Eppos^{w}[\nu(X_{i},\theta)]
    =\Eppos^{w}\left[\{\nu(X_{i},\theta)-\Eppos^{w}[\nu(X_{i},\theta)]\}\{s(X_{i},\theta)-\Eppos^{w}[s(X_{i},\theta)]\}\right].
\end{align*}
Further, 
we have
\begin{align*}
    &\frac{\partial}{\partial w_{i}}\frac{\int s(X_{i},\theta)\nu(X_{i},\theta)\mathrm{e}^{\sum_{i=1}^{n}w_{i}s(X_{i},\theta)}\pi(\theta)d\theta F(w)
    }
    {F^{2}(w)}\\
    &=
    \frac{\int s^{2}(X_{i},\theta)\nu(X_{i},\theta)\mathrm{e}^{\sum_{i=1}^{n}w_{i}s(X_{i},\theta)}\pi(\theta)d\theta}{F(w)}\\
    &\quad -\frac{\int s(X_{i},\theta)\nu(X_{i},\theta)\mathrm{e}^{\sum_{i=1}^{n}w_{i}s(X_{i},\theta)}\pi(\theta)d\theta
    \int s(X_{i},\theta)\mathrm{e}^{\sum_{i=1}^{n}w_{i}s(X_{i},\theta)}\pi(\theta)d\theta
    }{F^{2}(w)}\\
    &=
    \Eppos^{w}[s^{2}(X_{i},\theta)\nu(X_{i},\theta)]
    -
    \Eppos^{w}[s(X_{i},\theta)\nu(X_{i},\theta)]
    \Eppos^{w}[s(X_{i},\theta)]
\end{align*}
and 
\begin{align*}
    &\frac{\partial}{\partial w_{i}}\frac{\int s(X_{i},\theta)\mathrm{e}^{\sum_{i=1}^{n}w_{i}s(X_{i},\theta)}\pi(\theta)d\theta
    \int \nu(X_{i},\theta)\mathrm{e}^{\sum_{i=1}^{n}w_{i}s(X_{i},\theta)}\pi(\theta)d\theta}{F^{2}(w)}\\
    &=\frac{\int s^{2}(X_{i},\theta)\mathrm{e}^{\sum_{i=1}^{n}w_{i}s(X_{i},\theta)}\pi(\theta)d\theta \int \nu(X_{i},\theta)\mathrm{e}^{\sum_{i=1}^{n}w_{i}s(X_{i},\theta)}\pi(\theta)d\theta F^{2}(w)}{F^{4}(w)}\\
    &\quad+\frac{\int s(X_{i},\theta)\mathrm{e}^{\sum_{i=1}^{n}w_{i}s(X_{i},\theta)}\pi(\theta)d\theta \int \nu(X_{i},\theta)s(X_{i},\theta)\mathrm{e}^{\sum_{i=1}^{n}w_{i}s(X_{i},\theta)}\pi(\theta)d\theta F^{2}(w)}{F^{4}(w)}\\
    &\quad-2\frac{\{\int s(X_{i},\theta)\mathrm{e}^{\sum_{i=1}^{n}w_{i}s(X_{i},\theta)}\pi(\theta)d\theta\}^{2}
    \int \nu(X_{i},\theta)\mathrm{e}^{\sum_{i=1}^{n}w_{i}s(X_{i},\theta)}\pi(\theta)d\theta  F(w)}{F^{4}(w)}\\
    &=\Eppos^{w}[s^{2}(X_{i},\theta)]\Eppos^{w}[\nu(X_{i},\theta)]
    +\Eppos^{w}[s(X_{i},\theta)\nu(X_{i},\theta)]\Eppos^{w}[s(X_{i},\theta)]\\
    &\quad-2(\Eppos^{w}[s(X_{i},\theta)])^{2}\Eppos^{w}[\nu(X_{i},\theta)],
\end{align*}
which implies 
\begin{align*}
        \frac{\partial^{2}}{\partial w_{i}^{2}}
    \Eppos^{w}[\nu(X_{i},\theta)]
    =\Eppos^{w}\left[\{\nu(X_{i},\theta)-\Eppos^{w}[\nu(X_{i},\theta)]\}\{s(X_{i},\theta)-\Eppos^{w}[s(X_{i},\theta)]\}^{2}\right]
\end{align*}
and completes the proof.
\qed

\section{Detailed calculations used in Section \ref{subsec: bias reduction}}
\label{appendix: detailed calculi}
This appendix provides the detailed calculation used in Section \ref{subsec: bias reduction}.
The calculation employs the following lemma.
\begin{lemma}[Kumar, 1973]
\label{lem: product of quadratic forms}
Let $w$ be a random vector from $N(0,I_{d})$.
For $d\times d$ symmetric matrices $B$ and $C$, we have
\begin{align*}
    \Ep[ (w^{\top}Bw)(w^{\top}Cw)]
    =2\mathrm{tr}(BC)+(\mathrm{tr}B)\,(\mathrm{tr}C).
\end{align*}
\end{lemma}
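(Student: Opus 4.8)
The plan is to expand the two quadratic forms in coordinates and reduce everything to the fourth mixed moments of a standard Gaussian vector. Writing $B=(B_{ij})$ and $C=(C_{kl})$ and using $w^{\top}Bw=\sum_{i,j}B_{ij}w_iw_j$, one has
\[
(w^{\top}Bw)(w^{\top}Cw)=\sum_{i,j,k,l=1}^{d}B_{ij}C_{kl}\,w_iw_jw_kw_l,
\]
so taking expectations term by term reduces the claim to the evaluation of $\Ep[w_iw_jw_kw_l]$ for $w\sim N(0,I_d)$.

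The key input I would invoke is Isserlis' (Wick's) theorem for Gaussians: since the $w_i$ are i.i.d.\ standard normal with mean zero,
\[
\Ep[w_iw_jw_kw_l]=\delta_{ij}\delta_{kl}+\delta_{ik}\delta_{jl}+\delta_{il}\delta_{jk},
\]
the sum running over the three ways of splitting the four indices into two pairs. If one prefers a self-contained argument, this follows from elementary one-dimensional moments: a term is nonzero only when the indices pair up, and the configuration of a single repeated index contributes $\Ep[w_a^4]=3$, which is exactly the number of pairings collapsing to that configuration.

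Substituting and identifying each of the three sums with a trace finishes the proof. The $\delta_{ij}\delta_{kl}$ term gives $\left(\sum_iB_{ii}\right)\left(\sum_kC_{kk}\right)=(\mathrm{tr}B)(\mathrm{tr}C)$; the $\delta_{ik}\delta_{jl}$ term gives $\sum_{i,j}B_{ij}C_{ij}=\mathrm{tr}(BC^{\top})$; and the $\delta_{il}\delta_{jk}$ term gives $\sum_{i,j}B_{ij}C_{ji}=\mathrm{tr}(BC)$. Since $B$ and $C$ are symmetric, the last two both equal $\mathrm{tr}(BC)$, and summing the three contributions yields $2\,\mathrm{tr}(BC)+(\mathrm{tr}B)(\mathrm{tr}C)$.

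As an alternative route I would note that one can instead prove the diagonal case $B=C=A$ first, by orthogonally diagonalizing $A$ (a rotation leaves $N(0,I_d)$ invariant), so that $w^{\top}Aw=\sum_i\lambda_iz_i^2$ and $\Ep[(\sum_i\lambda_iz_i^2)^2]=2\sum_i\lambda_i^2+(\sum_i\lambda_i)^2=2\,\mathrm{tr}(A^2)+(\mathrm{tr}A)^2$, again via $\Ep[z_i^4]=3$, and then recover the bilinear identity by polarization applied to $A=B+C$. There is no genuine obstacle in either route: the only point requiring care is the bookkeeping of the three index pairings and the use of the symmetry of $B$ and $C$ to collapse $\mathrm{tr}(BC^{\top})$ to $\mathrm{tr}(BC)$; everything else is routine.
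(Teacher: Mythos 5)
Your proof is correct. The paper itself gives no proof of this lemma: it is imported as a known result with a citation to Kumar (1973), so there is nothing internal to compare against. Your main route --- expanding $(w^{\top}Bw)(w^{\top}Cw)=\sum_{i,j,k,l}B_{ij}C_{kl}\,w_iw_jw_kw_l$ and applying the Isserlis/Wick identity $\Ep[w_iw_jw_kw_l]=\delta_{ij}\delta_{kl}+\delta_{ik}\delta_{jl}+\delta_{il}\delta_{jk}$ --- is the standard derivation, and your identification of the three pairing sums with $(\mathrm{tr}B)(\mathrm{tr}C)$, $\mathrm{tr}(BC^{\top})$, and $\mathrm{tr}(BC)$, collapsed via symmetry of $B$ and $C$, is exactly right. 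The consistency check against $\Ep[w_a^4]=3$ and the alternative diagonalization-plus-polarization route are both sound as well; either argument would serve as a self-contained replacement for the external citation.
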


\textit{Step 1. Establishing (\ref{eq: case study posterior covariance})}:
Let us begin by expanding $\Covpos[s(X_{i},\theta),\nu(X_{i},\theta)]$.
For $i=1,\ldots,n$, we have
\begin{align*}
    &\Eppos[(X_{i}-\theta)^{\top}A(X_{i}-\theta)(X_{i}-\theta)^{\top}(X_{i}-\theta)]\\
    &=\Eppos\Big[
    \Big\{(X_{i}-\hat{\theta})^{\top}A(X_{i}-\hat{\theta})
    +2(\hat{\theta}-\theta)^{\top}A(X_{i}-\hat{\theta})
    +(\hat{\theta}-\theta)^{\top}A(\hat{\theta}-\theta)
    \Big\}\\
    &\quad\quad\quad\quad\quad\quad
    \Big\{
    (X_{i}-\hat{\theta})^{\top}(X_{i}-\hat{\theta})
    +2(\hat{\theta}-\theta)^{\top}(X_{i}-\hat{\theta})
    +(\hat{\theta}-\theta)^{\top}(\hat{\theta}-\theta)
    \Big\}
    \Big]\\
    &=\widetilde{X}_{i}^{\top}A\widetilde{X}_{i}
    \widetilde{X}_{i}^{\top}\widetilde{X}_{i}
    +
    \Eppos\big[2\widetilde{X}_{i}^{\top}A\widetilde{X}_{i}
    \widetilde{\theta}^{\top}\widetilde{X}_{i}\big]
    +
    \Eppos\big[\widetilde{X}_{i}^{\top}A\widetilde{X}_{i}
    \widetilde{\theta}^{\top}\widetilde{\theta}
    \big]
    \\
    &\quad
    +\Eppos\big[2\widetilde{\theta}^{\top}A\widetilde{X}_{i}
    \widetilde{X}_{i}^{\top}\widetilde{X}_{i}
    \big]
    +\Eppos\big[4\widetilde{\theta}^{\top}A\widetilde{X}_{i}
    \widetilde{\theta}^{\top}\widetilde{X}_{i}
    \big]
    +\Eppos\big[2\widetilde{\theta}^{\top}A\widetilde{X}_{i}
    \widetilde{\theta}^{\top}\widetilde{\theta}
    \big]
    \\
    &\quad
    +\Eppos\big[\widetilde{\theta}^{\top}A\widetilde{\theta}
    \widetilde{X}_{i}^{\top}\widetilde{X}_{i}
    \big]
    +\Eppos\big[2\widetilde{\theta}^{\top}A\widetilde{\theta}
    \widetilde{\theta}^{\top}\widetilde{X}_{i}
    \big]
    +\Eppos\big[\widetilde{\theta}^{\top}A\widetilde{\theta}
    \widetilde{\theta}^{\top}\widetilde{\theta}
    \big],
\end{align*}
where $\widetilde{X}_{i}:=X_{i}-\hat{\theta}$ and 
$\widetilde{\theta}:=\hat{\theta}-\theta$.
Lemma \ref{lem: product of quadratic forms} gives 
\begin{align*}
    \Eppos[(\hat{\theta}-\theta)^{\top}(\hat{\theta}-\theta)
    (\hat{\theta}-\theta)^{\top}A(\hat{\theta}-\theta)]
    = \frac{2+d}{(n\beta+1/\tau)^{2}}\mathrm{tr}(A)
\end{align*}
and thus we get
\begin{align}
    &\Eppos[(X_{i}-\theta)^{\top}(X_{i}-\theta)(X_{i}-\theta)^{\top}A(X_{i}-\theta)]\nonumber\\
    &=
    \widetilde{X}_{i}^{\top}A\widetilde{X}_{i}
    \widetilde{X}_{i}^{\top}\widetilde{X}_{i}
    +\frac{4+d}{n\beta+1/\tau}\widetilde{X}_{i}^{\top}A\widetilde{X}_{i}
    +\frac{\mathrm{tr}(A)}{n\beta+1/\tau}
    \widetilde{X}_{i}^{\top}\widetilde{X}_{i}
    + \frac{2+d}{(n\beta+1/\tau)^{2}}\mathrm{tr}(A)
    \label{eq: snu 1}
\end{align}
Further, for $i=1,\ldots,n$, we have
\begin{align}
    &\Eppos[(X_{i}-\theta)^{\top}A(X_{i}-\theta)]
    \Eppos[(X_{i}-\theta)^{\top}(X_{i}-\theta)]
    \nonumber\\
    &=\left\{\widetilde{X}_{i}^{\top}\widetilde{X}_{i}+\frac{d}{n\beta+1/\tau}\right\}
    \left\{\widetilde{X}_{i}^{\top}A\widetilde{X}_{i}+\frac{\mathrm{tr}(A)}{n\beta+1/\tau}\right\}\nonumber\\
    &=\widetilde{X}_{i}^{\top}\widetilde{X}_{i}
    \widetilde{X}_{i}^{\top}A\widetilde{X}_{i}
    +\frac{\mathrm{tr}(A)}{n\beta+1/\tau}\widetilde{X}_{i}^{\top}\widetilde{X}_{i}
    +\frac{d}{n\beta+1/\tau}
    \widetilde{X}_{i}^{\top}A\widetilde{X}_{i}
    +\frac{d}{n\beta+1/\tau}
    \frac{\mathrm{tr}(A)}{n\beta+1/\tau}.
    \label{eq: snu 2}
\end{align}
Combining (\ref{eq: snu 1}) and (\ref{eq: snu 2}) yields
\begin{align*}
    \Covpos[\nu(X_{i},\theta),s(X_{i},\theta)]
    =-\frac{\beta}{2}\left\{\frac{4}{n\beta+1/\tau}\widetilde{Y}_{i}^{\top}A\widetilde{Y}_{i}+
    \frac{2\mathrm{tr}(A)}{(n\beta+1/\tau)^{2}}
    \right\},
\end{align*}
which implies (\ref{eq: case study posterior covariance}).

\textit{Step 2. Establishing (\ref{eq: case study difference})}:
Next, we take the expectation of $(1/n)\sum_{i=1}^{n}(X_{i}-\hat{\theta})^{\top}A(X_{i}-\hat{\theta})$. Let $a:=(n\beta\tau)/(n\beta\tau+1)$.
Then, we have
\begin{align*}
    &\Ep\left[\frac{1}{n}\sum_{i=1}^{n}(X_{i}-\hat{\theta})^{\top}A(X_{i}-\hat{\theta})\right]\\
    &=\Ep\left[\frac{1}{n}\sum_{i=1}^{n}(X_{i}-\theta^{*})^{\top}A(X_{i}-\theta^{*})\right]
    +(1-a)^{2}(\theta^{*})^{\top}A\theta^{*}
    \\
    &\qquad\qquad
    +(a^{2}-2a)\Ep\left[(\bar{X}-\theta^{*})^{\top}A(\bar{X}-\theta^{*})\right]\\
    &=\mathrm{tr}(A)+(1-a)^{2}(\theta^{*})^{\top}A\theta^{*}
    +(a^{2}-2a)\frac{\mathrm{tr}(A)}{n},
\end{align*}
which implies (\ref{eq: case study difference}).

\textit{Step 3. Establishing (\ref{eq: case study modification})}:
Finally, we consider $\Covpos[(X_{i}-\theta)^{\top}A(X_{i}-\theta),\theta^{\top}\theta]$. For $i=1,\ldots,n$, we have
\begin{align*}
    &\Eppos[(X_{i}-\theta)^{\top}A(X_{i}-\theta)\theta^{\top}\theta]\\
    &=\Eppos
    \Big[\Big\{
    (X_{i}-\hat{\theta})^{\top}A(X_{i}-\hat{\theta})
    +(\hat{\theta}-\theta)^{\top}A(\hat{\theta}-\theta)
    +2(X_{i}-\hat{\theta})^{\top}A(\hat{\theta}-\theta)
    \Big\}\\
    &\qquad\qquad\qquad\qquad\qquad\qquad\qquad\qquad
    \Big\{
    (\hat{\theta}-\theta)^{\top}(\hat{\theta}-\theta)
    -2(\hat{\theta}-\theta)^{\top}\hat{\theta}
    +\hat{\theta}^{\top}\hat{\theta}
    \Big\}
    \Big]\\
    &=
    \frac{1}{n\beta+1/\tau}(X_{i}-\hat{\theta})^{\top}A(X_{i}-\hat{\theta})
    +(X_{i}-\hat{\theta})^{\top}A(X_{i}-\hat{\theta})\hat{\theta}^{\top}\hat{\theta}
    +\frac{\mathrm{tr}(A)}{n\beta+1/\tau}\hat{\theta}^{\top}\hat{\theta}\\
    &\quad
    +\frac{2+d}{(n\beta+1/\tau)^{2}}\mathrm{tr}(A)
    +\frac{4}{n\beta+1/\tau}(X_{i}-\hat{\theta})^{\top}A\hat{\theta}
\end{align*}
and we have 
\begin{align*}
    &\Eppos[(X_{i}-\theta)^{\top}A(X_{i}-\theta)]\Eppos[\theta^{\top}\theta]\\
    &=
    \Big\{(X_{i}-\hat{\theta})^{\top}A(X_{i}-\hat{\theta})+\frac{\mathrm{tr}(A)}{n\beta+1/\tau}\Big\}
    \Big\{\hat{\theta}^{\top}\hat{\theta}+\frac{1}{n\beta+1/\tau}
    \Big\}\\
    &=
    \frac{1}{n\beta+1/\tau}(X_{i}-\hat{\theta})^{\top}A(X_{i}-\hat{\theta})
    +(X_{i}-\hat{\theta})^{\top}A(X_{i}-\hat{\theta})\hat{\theta}^{\top}\hat{\theta}
    +\frac{\mathrm{tr}(A)}{n\beta+1/\tau}\hat{\theta}^{\top}\hat{\theta}\\
    &\qquad\qquad+\frac{\mathrm{tr}(A)}{(n\beta+1/\tau)^{2}}.
\end{align*}
Combining these yields
\begin{align*}
    \Covpos[(X_{i}-\theta)^{\top}A(X_{i}-\theta),\theta^{\top}\theta]
    =-\frac{4}{n\beta+1/\tau}(X_{i}-\hat{\theta})^{\top}A\hat{\theta}
    +\frac{1+d}{(n\beta+1/\tau)^{2}}\mathrm{tr}(A).
\end{align*}
This,
together with the identity
\begin{align*}
    \Ep\left[\frac{1}{n}\sum_{i=1}^{n}(X_{i}-\hat{\theta})^{\top}A\hat{\theta}
    \right]
    =a(1-a)\Ep\left[\overline{X}^{\top}A\overline{X}\right]
    =a(1-a)\{(\theta^{*})^{\top}A\theta^{*}
    +\mathrm{tr}(A)/n\},
\end{align*}
yields (\ref{eq: case study modification}).

\subsection*{References for this appendix}
\begin{enumerate}\setlength{\leftskip}{+0.5cm}
    \item[{[1]}] Kumar, A. (1973) Expectation of products of quadratic forms. \textit{Sankha}, \textbf{35}, 359--362.
\end{enumerate}

\section{Additional figures}

This appendix provides additional figures.
Figure \ref{fig:stackloss} depicts the result for the stack loss data; see Subsection 3.2 for details.
Figure \ref{fig:Gesell} depicts the result for the Gesell data; see Subsection 3.2 for details.

\begin{figure}[h!]
    \centering
    \includegraphics[width=90mm]{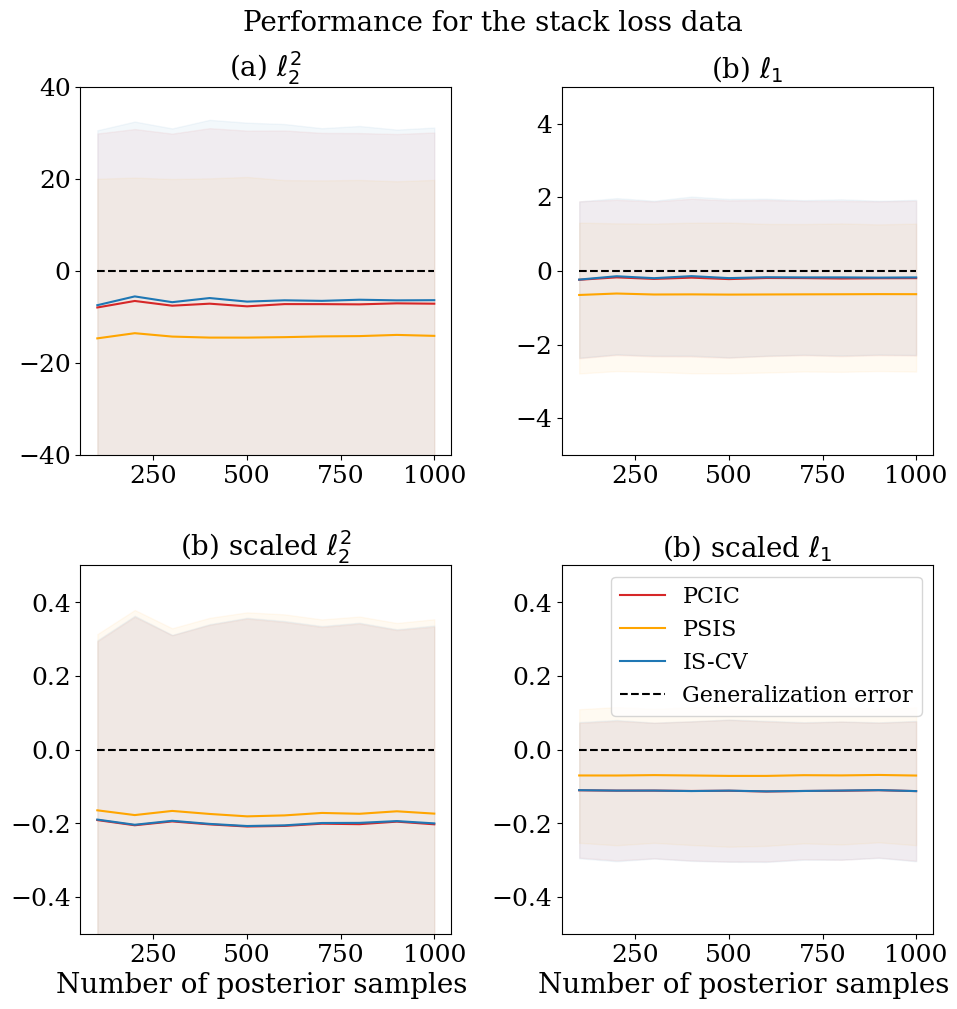}
    \caption{
    Performance of $\mathrm{PCIC}_{\mathrm{G}}$, IS-CV, and PSIS-CV for the stack loss dataset.
    The vertical axes present values relative to the generalisation errors,
    while the horizontal axes correspond to the number of posterior samples.
    Averages are computed across different experiments, with colored shades representing $\pm \sigma$.
    (a) results for $\ell_{2}^{2}$ loss;
    (b) those for $\ell_{1}$ loss;
    (c) those for scaled $\ell_{2}^{2}$ loss;
    (d) those for scaled $\ell_{1}$ loss.
    }
    \label{fig:stackloss}
\end{figure}

\begin{figure}[h!]
    \centering
    \includegraphics[width=90mm]{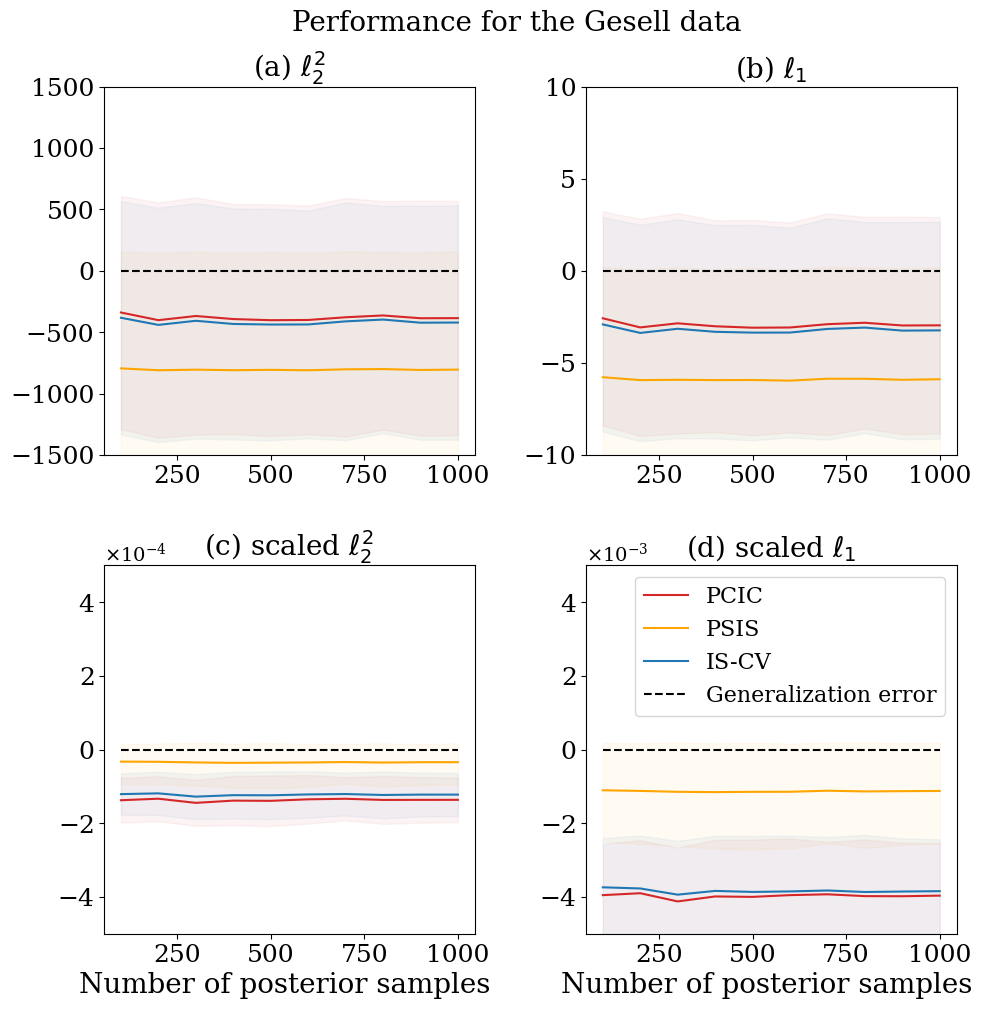}
    \caption{
    Performance of $\mathrm{PCIC}_{\mathrm{G}}$, IS-CV, and PSIS-CV for the Gesell dataset.
    The vertical axes present values relative to the generalisation errors,
    while the horizontal axes correspond to the number of posterior samples.
    Averages are computed across different experiments, with colored shades representing $\pm \sigma$.
    (a) results for $\ell_{2}^{2}$ loss;
    (b) those for $\ell_{1}$ loss;
    (c) those for scaled $\ell_{2}^{2}$ loss;
    (d) those for scaled $\ell_{1}$ loss.
    }
    \label{fig:Gesell}
\end{figure}

\end{document}